\newcommand{\cov}{\text{Cov}}
\newcommand{\corr}{\text{Corr}}
\newcommand{\bE}{\text{E}}
\newcommand{\ii}{\mathbf{i}}
\newcommand{\jj}{\mathbf{j}}
\newcommand{\bI}{\mathbf{I}}
\newcommand{\bX}{\mathbf{X}}
\newcommand{\cJ}{\mathcal{J}}
\newcommand{\cN}{\mathcal{N}}
\newcommand{\bP}{\Pr}
\newcommand{\var}{\text{Var}}
\newcommand{\balpha}{\bm{\alpha}}
\renewcommand{\mathbf}[1]{\bm{#1}}
\newcommand{\bzero}{\mathbf{0}}
\newcommand{\bone}{\mathbf{1}}
\newcommand{\ignore}[1]{}
\def\bSig\mathbf{\Sigma}
\newtheorem{assumption}{Assumption}
\newtheorem{condition}{Condition}
\newtheorem{corollary}{Corollary}
\newproof{proof}{Proof}
\newtheorem{proposition}{Proposition}
\newtheorem{remark}{Remark}
\newtheorem{theorem}{Theorem}
\newenvironment{psmallmatrix}{\left(\begin{smallmatrix}}{\end{smallmatrix}\right)}
\newif\ifstartedinmathmode
\newcommand\encircled[1]{%
  \relax\ifmmode\startedinmathmodetrue\else\startedinmathmodefalse\fi%
  \tikz[baseline,anchor=base]{%
  \node[draw,circle,outer sep=0pt,inner sep=.2ex]
    {\ifstartedinmathmode$#1$\else#1\fi};}%
}
\begin{document}

\begin{frontmatter}


 \title{A new non-parametric test for multivariate paired data from pair matching or paired designs}

\author[label1]{Jingru Zhang}
\affiliation[label1]{organization={School of Data Science, Fudan University},
            addressline={Shanghai 200433},
            country={China}}
\author[label2]{Hao Chen\corref{cor1}}
\ead{hxchen@ucdavis.edu}
\cortext[cor1]{Corresponding author}
\affiliation[label2]{organization={Department of Statistics, University of California, Davis},
            addressline={Davis, CA 95616},
            country={USA}}
\author[label3]{Xiao-Hua Zhou}
\affiliation[label3]{organization={Beijing International Center for Mathematical Research and School of Public Health, Peking University},
            addressline={Beijing 100871},
            country={China}}




\begin{abstract}
In observational studies, achieving covariate balance in pair matching between treatment and control groups or exposed and unexposed groups is essential. 
This balance enables testing treatment effects or examining {associations between exposures and} multivariate response variables in pair-matched data. 
Paired design studies involve taking multiple measurements for the same subjects under different conditions. All these call for an effective test for multivariate paired data.
However, current methods for assessing covariate balance in matched observational studies often ignore the paired structure, leading to reduced performance in some cases. The multivariate paired Hotelling's $T^2$ test can be used for paired data, but its power decreases rapidly as dimensions increase.
To address these issues, we propose a new non-parametric test for paired data, significantly improving power across various scenarios. We also derive the test's asymptotic distribution, making it user-friendly for practical applications. Our proposed test's effectiveness is demonstrated through an analysis of real data on Alzheimer's disease research.
\end{abstract}



\begin{keyword}
graph-based test \sep 
matched pairs \sep
non-parametric test \sep
observational studies \sep
paired-comparison permutation null distribution


\end{keyword}

\end{frontmatter}


\section{{Introduction}}\label{sec:intro}
Random assignment is ideal for analyzing treatment effects or {associations with} exposures, as it ensures that covariates are distributed similarly between groups.
However, randomized studies are often impractical, particularly when studying inherent factors like sex. 
In such cases, observational studies rely on multivariate matching to balance covariates between groups. 
While exact matching is rarely feasible when dealing with many covariates, approximate balance in paired data can be achieved using matching strategies. 
It is then crucial to evaluate whether covariate balance has been adequately achieved.

One common approach to assess covariate balance is through the propensity score \citep{rosenbaum1985constructing}, which is the probability of a subject being assigned to the treatment given covariates.
By regressing treatment assignment on covariates, multivariate covariates are summarized into a propensity score. 
However, this method requires modeling assumptions \citep{austin2008assessing,austin2019assessing,cannas2019comparison,harder2010propensity,imai2014covariate}. 
To reduce dependency on these assumptions, less model-based methods have been proposed. 
For instance, {\citet{gagnon2019classification} developed the classification permutation test, which combines classification methods with Fisherian permutation inference.}
\citet{hansen2008covariate} introduced the method of combined differences.
Graph-based methods have also gained attention for covariate balance testing. 
\citet{rosenbaum2005} proposed the crossmatch test, which evaluates the between-group edge count on a nonbipartite graph, and this method has been applied to testing covariate balance in many studies \citep{de2016evaluation,heller2010using}.  
More recently, \citet{chen2022new} proposed new tests, CrossNN and CrossMST, based on the nearest neighbor graph and the minimum spanning tree, respectively, where two within-group edge counts are used to construct statistics.  
However, these methods do not take pair matching into consideration, potentially reducing their effectiveness. 
For example, in exploring the { association between sex and} Alzheimer's disease \citep{carter2012sex,mazure2016sex}, researchers often match male and female participants in pairs based on a few covariates. It is then important to assess whether these covariates are well balanced after matching.
As we will see from the results in our real application (Section \ref{sec:realapp2}), existing methods may indicate balanced covariates, while the paired $t$-test with Bonferroni correction rejects the null hypothesis of balance.

Paired data are also common in paired design studies, where the same subjects are measured twice under different conditions. 
In such cases, testing whether multivariate outcomes differ significantly between conditions is of scientific interest.  
{The multivariate paired Hotelling's $T^2$ test is widely used in low-dimensional settings \citep{rencher2012methods}. However, its effectiveness diminishes in moderate-dimensional cases, and it faces significant challenges in high-dimensional settings unless strong assumptions are imposed to enable the estimation of the covariance matrix.  On the other hand, two-sample hypothesis testing approaches are generally unsuitable for paired data from paired designs, as such data typically exhibit dependencies between paired observations, whereas two-sample tests  assume independence between the two samples.}

{
To be more specific, let $(X_i, Y_i),~i=1, \dots, n,$ represent the paired data, where $X_i$ and $Y_i$ are multivariate observations.  Assume that $(X_i, Y_i) \stackrel{iid}{\sim} F_1$. 
Under the null hypothesis of no difference between the conditions in the paired design, the observations before and after are exchangeable.
That is, $(Y_i, X_i) \stackrel{iid}{\sim} F_2$, and we are interested in testing 
\begin{equation}\label{eq:prob}
H_0: F_1 = F_2,
\end{equation}
against the alternative
\begin{equation*}
H_a: F_1 \neq F_2.
\end{equation*}
On the other hand, in the typical two-sample testing setting, assume that
$X_i \stackrel{iid}{\sim} F_X$, and $Y_i \stackrel{iid}{\sim} F_Y$. 
The goal is to test \begin{equation}\label{eq:testold}
H_0: F_X = F_Y,
\end{equation}
against the alternative
\begin{equation*}
H_a: F_X \neq F_Y.
\end{equation*}
It is important to note that the null hypothesis \eqref{eq:prob} is equivalent to requiring that the joint distribution of \((X_i, Y_i)\) be symmetric under coordinate swapping. Consequently, when \(X_i\) and \(Y_i\) are independent, the two hypotheses \eqref{eq:prob} and \eqref{eq:testold} coincide.
However, when data within each pair are dependent---as is often the case with paired data---\eqref{eq:testold} may still hold while \eqref{eq:prob} fails due to the asymmetry in the joint distribution. 
{From a substantive perspective, hypothesis (1) is often the more relevant target in paired or repeated-measures studies because it assesses whether the observations obtained under the two conditions are exchangeable within individuals. In contrast, hypothesis (2) only compares the marginal distributions and ignores the dependence structure within pairs. As a result, systematic within-pair changes may be obscured when positive and negative individual-level effects offset each other at the population level. For example, in a causal inference setting, a treatment may substantially benefit some individuals while adversely affecting others, leading to identical marginal distributions before and after treatment. In such a scenario, hypothesis (2) would not reject despite the presence of meaningful individual-level treatment effects, whereas hypothesis (1) could still detect the lack of exchangeability within pairs. Thus, testing (1) can reveal departures from within-pair exchangeability that would be invisible to analyses based solely on marginal distributions.}
A simple example illustrating this distinction is provided in \ref{toyexample}.

}

To address the testing problem \eqref{eq:prob} for multivariate data, 
we adopt a non-parametric framework -- the graph-based framework. 
Graph-based tests have gained popularity for their flexibility and mild assumptions on data distributions. They can accommodate heavy-tailed or skewed distributions, high-dimensional settings, and a variety of distributional differences, whether global or localized to sparse subsets of variables. 
These tests also tend to outperform likelihood-based methods in high-dimensional settings \citep{friedman1979,chen2017new,chen2018weighted,chen2022new,zhang2022graph}.
However, existing graph-based methods do not account for the paired structure and instead rely on the two-sample permutation framework.
Specifically, with $m$ observations in sample 1 and $n$ observations in sample 2, these methods assume exchangeability under the null hypothesis \eqref{eq:testold} and assign equal probability to each of the $\begin{psmallmatrix}
m+n \\
m
\end{psmallmatrix}$ permutations of assigning $m$ out of $m+n$ pooled observations to sample 1 and the rest to sample 2.
While suitable for unpaired data, this framework fails for paired data, where $X_i$ and $Y_i$ may be dependent. 
Consequently, existing statistics and techniques for deriving analytical results and asymptotic distributions become inapplicable.
To overcome these limitations, we develop a novel permutation framework tailored to paired data, and summarize our contributions as follows:
\begin{enumerate}
\item A new test for assessing covariate balance in pair matching: We propose a non-parametric test that accounts for the paired structure, demonstrating higher power than existing methods for assessing multivariate covariate balance across various settings.

\item Broad applicability to multivariate paired data: The proposed test works effectively for paired design studies, where subjects are measured twice under different conditions, and for matched studies, where multiple response variables assess treatment or exposure effects in well-paired data (e.g., neuropsychological measures in matched male and female participants).

\item A novel paired-comparison permutation framework: We introduce a permutation framework specific to paired data, which differs significantly from the two-sample framework used in prior works \citep[e.g.,][]{friedman1979,chen2017new,chen2018weighted,chen2022new}. This framework introduces complexities in deriving analytical expressions for expectations and variances, as well as in proving asymptotic properties, requiring novel theoretical tools.
\end{enumerate}

The remainder of the paper is organized as follows. Section \ref{sec:method} presents the proposed non-parametric framework for paired data and derives the asymptotic distribution of the test statistic under the paired-comparison permutation framework. Section \ref{sec:simu} examines the performance of the proposed test through extensive simulation studies. Section \ref{sec:realdata} demonstrates the test's application in Alzheimer's disease research. Finally, we conclude the paper with a discussion in Section \ref{sec:discuss} and a summary of findings in Section \ref{sec:conclude}.

\section{A new non-parametric test based on a similarity graph}\label{sec:method}
To assess multi-covariate balance from pair matching or multivariate paired data, the one-to-one correspondence between paired observations makes it inappropriate to use the common permutation null distribution applied in the usual two-sample testing setting. 
Instead, it is more suitable to treat the paired observations as exchangeable under the null hypothesis.
This approach is particularly necessary for non-independent paired data from paired designs, where observations from the two samples are no longer exchangeable. 
Let $(X_i, Y_i)$, $i \in \{1, \ldots, n\}$, denote the paired data from pair matching or paired designs. 
The \emph{paired-comparison permutation null distribution} assigns a probability of \(2^{-n}\) to each of the \(2^n\) possible configurations, where for each \(i\), \(X_i\) is assigned to sample 1 and \(Y_i\) to sample 2, or \(Y_i\) is assigned to sample 1 and \(X_i\) to sample 2.
Throughout this paper, unless otherwise specified, we use $\bP$, $\bE$, $\var$, and $\cov$ to represent probability, expectation, variance, and covariance, respectively, under this paired-comparison permutation null distribution.

{
We begin by constructing a similarity graph on the pooled observations. Using distance metrics such as the Euclidean distance, a similarity graph can be built based on various criteria.  
For instance, a similarity graph \( G \) can be a minimum spanning tree (MST), which is a spanning tree that connects all observations while minimizing the total distance of the edges. The choice of the distance metric is not restricted to the Euclidean distance; alternatives such as the $L_1$ norm or other norms can also be used. When observations do not lie in Euclidean space, other forms of dissimilarity information may be employed; more discussions on the choice of the distance can be found in \cite{chen2013}. 

Beyond the MST, other methods of constructing the similarity graph can be applied. For example, a nearest neighbor graph connects each observation to its nearest neighbor. Denser graph structures, such as the $k$-MST, can also be considered. The $k$-MST is the union of the 1st, 2nd, $\dots$, the $k$th MST, where the 1st MST corresponds to the standard MST. Each subsequent $l$th MST is a spanning tree that connects all observations while minimizing the total distance of the edges, under the condition that it does not reuse any edges from the previous 1st, $\dots$, $(l-1)$th MSTs. 

Fig.\ \ref{fig:graph} illustrates examples of similarity graphs constructed using the MST with the Euclidean distance under the null and alternative hypotheses, highlighting how the graph's characteristics vary across different scenarios.
In the figure, circular and square nodes represent observations from samples 1 and 2, respectively, while
 nodes labeled \( i \) and \( i^* \) indicate observations from the same pair.  
Under the null hypothesis (Fig.\ \ref{fig:graph}(a)), circular and square nodes are typically well mixed, with a substantial number of edges connecting nodes from different samples. 
Under a mean difference scenario (Fig.\ \ref{fig:graph}(b)), nodes tend to connect more frequently to others from the same sample, reflecting group separation.
In the case of a variance difference (Fig.\ \ref{fig:graph}(c)), circular nodes with smaller variance predominantly form within-sample connections, while square nodes with larger variance are more likely to connect to nodes with smaller variance. This behavior is attributed to the curse of dimensionality, where the volume increases exponentially with dimension. Observations from the larger variance distribution are sparsely scattered, causing those nearer-to-the-center observations to appear more proximate. In contrast, observations from the smaller variance distribution are relatively concentrated around the center of the data.
}
\begin{figure}[!h]
  \centering
  \includegraphics[width=\textwidth]{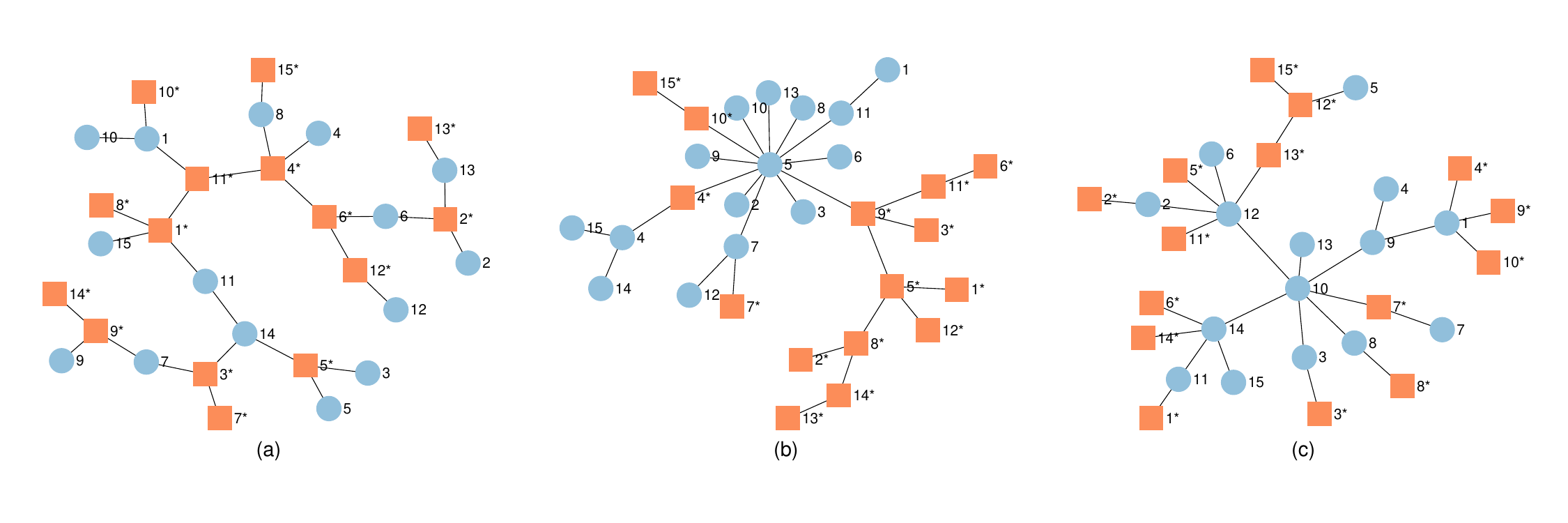}
  \vspace{-.65cm}
  \caption{Examples of similarity graphs $G$ constructed on 100-dimensional data: (a) under the null hypothesis; (b) with a mean difference; (c) with a variance difference. Circular and square nodes represent observations from different samples.}\label{fig:graph}
\end{figure}

Since the graph is constructed on the pooled observations, it remains unchanged under the paired-comparison permutation null distribution. 
For simplicity of notation, we do not include \( G \) as a subscript for quantities that depend on \( G \).  
Let \( R_1 \) denote the number of edges in the similarity graph that connect observations within sample 1, and \( R_2 \) denote the number of edges that connect observations within sample 2. 
The new test statistic is defined as:
\begin{equation}\label{eq:Zg}
D = \begin{pmatrix}
R_1 - \bE(R_1) \\
R_2 - \bE(R_2)
\end{pmatrix}^\top
\mathbf{\Sigma}_R^{-1}
\begin{pmatrix}
R_1 - \bE(R_1) \\
R_2 - \bE(R_2)
\end{pmatrix},
\end{equation}
where \(\mathbf{\Sigma}_R = \var((R_1, R_2)^\top)\). 
The null hypothesis is rejected at the significance level \(\alpha\) if \( D > C(\alpha) \), where $C(\alpha)$ is a critical value.
\begin{remark}
Although the form of \( D \) is identical to the test statistic in \cite{chen2017new}, the expectation and variance in \eqref{eq:Zg} are computed under the paired-comparison permutation null distribution, which accounts for the paired structure of the data. Consequently, their analytical formulas differ from those in \cite{chen2017new}.
\end{remark}

In the following, we derive exact analytical expressions for \(\bE(R_1)\), \(\bE(R_2)\), \(\var(R_1)\), \(\var(R_2)\), and \(\cov(R_1, R_2)\), enabling efficient computation of the proposed test statistic (Section \ref{sec:expression}). We also discuss how to determine the critical value \( C(\alpha) \) analytically (Section \ref{sec:asym}).

\subsection{Analytic expressions for the new paired test statistic}\label{sec:expression}

Let $N=2n$ be the total number of observations and let $Z_i=I(i\leq n)X_i+I(i>n)Y_{i-n}$, $i\in\{1,\ldots, N\}$, where $I(\cdot)$ is the indicator function.
Let $g_i$ be an indicator function that equals 1 when $Z_i$ is assigned to sample 1 under the paired-comparison permutation null distribution, and 0 if $Z_i$ is assigned to sample 2. 
It is straightforward to verify that $\bP(g_i = 1) = 0.5$. 
We use $a\wedge b$ to denote the minimum of $a$ and $b$, and $a\vee b$ to denote the maximum of $a$ and $b$.
{Let $Z_{i^*}$ denote the observation paired with $Z_i$, where $i^* = i+n$ if $i\leq n$, and $i^*=i-n$ if $i>n$.
Define $\ii=(i\wedge i^*, i\vee i^*)$, where $i$ and $i^*$ are the indices of the two observations in pair $\ii$.  
By construction, we always have $g_i+g_{i^*}=1$. 
Similarly, define $\jj=(j\wedge j^*, j\vee j^*)$, where $j$ and $j^*$ are the indices of the two observations in pair $\jj$. }
Since assigning $Z_i$ to sample 1 is independent of assigning $Z_j$ $(\jj\neq \ii)$ to sample 1, $g_i$ and $g_j$ are independent. 


For an edge in $G$, we denote it by the indices of the nodes it connects. 
By definition, we have
\vspace{-0.2cm}
\begin{align*}
R_1 = \sum_{(i,j)\in G}I(g_i=g_j=1) \text{ ~ and ~ } R_2 = \sum_{(i,j)\in G}I(g_i=g_j=0),
\end{align*}
where we do not distinguish between edge $(i,j)$ and edge $(j,i)$. 
For the graph $G$, let $|G|$ be the number of edges in the graph. 
Let $G_1$ be the subgraph of $G$ that connects observations from different pairs, i.e., $G_1$ consists of edges $\{(i,j)\in G: j\neq i^*\}$.
Let $G_{1,i}$ be the subgraph of $G_1$ that connects to node $i$ and let $|G_{1,i}|$ denote the degree of node $i$ in $G_1$. 
Define $C_1$ as the number of pairs of edges $(i,j),~(i^*,j^*)\in G_1$, and $C_2$ as the number of pairs of edges $(i,j),~(i,j^*)\in G_1$. The analytic expressions are provided in the following theorem.
\begin{theorem}\label{th:expression0}
The analytic expressions of the expectations and variances under the paired-comparison permutation null are as follows:
\vspace{-.1cm}
\begin{flalign*}
& \hspace{1.5cm}\bE(R_1) = \bE(R_2) = \frac{1}{4}|G_1|,& 
\end{flalign*}
\vspace{-.75cm}
\begin{flalign*}
& \hspace{1.5cm}\var(R_1) = \var(R_2) = \frac{1}{16}(|G_1|+2C_1-2C_2)+\frac{1}{16}\sum_{i=1}^n(|G_{1,i}|-|G_{1,i^*}|)^2, &
\end{flalign*}
\vspace{-.75cm}
\begin{flalign*}
& \hspace{1.5cm}\cov(R_1,R_2) = \frac{1}{16}(|G_1|+2C_1-2C_2)-\frac{1}{16}\sum_{i=1}^n(|G_{1,i}|-|G_{1,i^*}|)^2.&
\end{flalign*}
\end{theorem}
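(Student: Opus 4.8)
The plan is to linearize the counts by passing to signs. For $i=1,\dots,N$ set $\epsilon_i = 2g_i-1 \in \{-1,+1\}$; then under the paired-comparison permutation null $\bE(\epsilon_i)=0$, $\epsilon_i^2=1$, $\epsilon_{i^*}=-\epsilon_i$ (because $g_i+g_{i^*}=1$), and $\epsilon_i,\epsilon_j$ are independent whenever $\ii\neq\jj$, so $\epsilon_1,\dots,\epsilon_n$ are i.i.d.\ $\pm1$ with probability $1/2$ each. Any edge $(i,i^*)$ joining the two members of a pair contributes $0$ to both $R_1$ and $R_2$, since $g_ig_{i^*}=g_i(1-g_i)=0$ and $(1-g_i)(1-g_{i^*})=0$, so $R_1$ and $R_2$ only see $G_1$. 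Writing $I(g_i=g_j=1)=\tfrac14(1+\epsilon_i)(1+\epsilon_j)$ and $I(g_i=g_j=0)=\tfrac14(1-\epsilon_i)(1-\epsilon_j)$ and summing over $(i,j)\in G_1$ gives
\[
R_1+R_2 = \tfrac12|G_1| + \tfrac12 W, \qquad R_1-R_2 = \tfrac12 V,
\]
where $W=\sum_{(i,j)\in G_1}\epsilon_i\epsilon_j$ and $V=\sum_{(i,j)\in G_1}(\epsilon_i+\epsilon_j)=\sum_{i=1}^n(|G_{1,i}|-|G_{1,i^*}|)\epsilon_i$ (each vertex $k$ supplies $\epsilon_k$ exactly $|G_{1,k}|$ times).

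The easy ingredients follow at once. Since every summand of $W$ and of $V$ carries a factor $\epsilon_a$ from a pair not shared by its other factor, $\bE(W)=\bE(V)=0$, hence $\bE(R_1)=\bE(R_2)=\tfrac14|G_1|$. Because $\epsilon_1,\dots,\epsilon_n$ are i.i.d.\ with unit variance, $\var(V)=\sum_{i=1}^n(|G_{1,i}|-|G_{1,i^*}|)^2$, so $\var(R_1-R_2)=\tfrac14\var(V)$. Also $\cov(V,W)=\bE(VW)=0$: each term is $\bE(\epsilon_i\epsilon_j\epsilon_k)$ with $i,j$ in two distinct pairs and $k$ in a single pair, and in every case some pair contributes an isolated mean-zero $\epsilon$ (if $k$ lies in a fresh pair all three are isolated; if $k\in\{i,i^*\}$ then $\epsilon_i\epsilon_k=\pm1$ while $\epsilon_j$ is isolated). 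Therefore $\cov(R_1+R_2,R_1-R_2)=0$; this is also transparent from the measure-preserving symmetry $\epsilon\mapsto-\epsilon$, which swaps $R_1\leftrightarrow R_2$ and in particular forces $\var(R_1)=\var(R_2)$.

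The crux is $\var(W)=\bE(W^2)=\sum_{e,e'\in G_1}\bE(\epsilon_e\epsilon_{e'})$, writing $\epsilon_e=\epsilon_i\epsilon_j$ for $e=(i,j)$. Using $\epsilon_a^2=1$, $\epsilon_{a^*}=-\epsilon_a$ and independence across pairs, $\bE(\epsilon_e\epsilon_{e'})$ vanishes unless the four endpoints of $e$ and $e'$ occupy exactly two pairs; since $e$ and $e'$ each join two distinct pairs, this forces $e'$ to be one of $e$, $\overline{e}:=(i^*,j^*)$, $(i,j^*)$, or $(i^*,j)$, all distinct (using $i\neq i^*$ and $e\in G_1$). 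A short case check gives $\bE(\epsilon_e\epsilon_{e'})=+1$ when $e'\in\{e,\overline{e}\}$ and $\bE(\epsilon_e\epsilon_{e'})=-1$ for the two one-shared-vertex configurations. Summing over ordered pairs — the diagonal gives $|G_1|$, each of the $C_1$ unordered pairs $\{e,\overline{e}\}$ gives $+2$, and each of the $C_2$ unordered pairs sharing one vertex with mirror-image remaining endpoints gives $-2$ — yields $\bE(W^2)=|G_1|+2C_1-2C_2$, so $\var(R_1+R_2)=\tfrac14(|G_1|+2C_1-2C_2)$. Finally, from $R_1=\tfrac12[(R_1+R_2)+(R_1-R_2)]$ and $R_2=\tfrac12[(R_1+R_2)-(R_1-R_2)]$ together with $\cov(R_1+R_2,R_1-R_2)=0$, we get $\var(R_1)=\var(R_2)=\tfrac14[\var(R_1+R_2)+\var(R_1-R_2)]$ and $\cov(R_1,R_2)=\tfrac14[\var(R_1+R_2)-\var(R_1-R_2)]$; substituting the two variances just computed produces the stated expressions. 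I expect the main obstacle to be exactly this $\bE(W^2)$ bookkeeping: isolating the surviving edge configurations, attaching the right signs, and matching ordered-pair multiplicities to the unordered counts $C_1$ and $C_2$.
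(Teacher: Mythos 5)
Your proof is correct, and it takes a genuinely different route from the paper's. The paper works directly with the indicator counts: it expands $\bE(R_1^2)$ and $\bE(R_1R_2)$ as sums of joint probabilities $\bP(g_i=g_j=g_u=g_v=1)$, enumerates the edge-pair configurations (shared node, mirrored pair, disjoint pairs) case by case, and only afterwards combines $\var(R_1)$, $\var(R_2)$, $\cov(R_1,R_2)$ to get the sum/difference variances in Corollary 2. You instead pass to Rademacher variables $\epsilon_i=2g_i-1$, which linearizes everything: $R_1+R_2$ becomes an affine function of the degree-two Walsh chaos $W=\sum_{(i,j)\in G_1}\epsilon_i\epsilon_j$ and $R_1-R_2$ an affine function of the linear form $V=\sum_i(|G_{1,i}|-|G_{1,i^*}|)\epsilon_i$, so that $\bE(V^2)$ is immediate, $\bE(VW)=0$ by parity (or by the $\epsilon\mapsto-\epsilon$ symmetry), and the only real computation is $\bE(W^2)$, where the survival criterion ``the two pairs of $e'$ coincide with the two pairs of $e$'' cleanly isolates the four configurations $e,(i^*,j^*),(i,j^*),(i^*,j)$ with signs $+1,+1,-1,-1$, reproducing $|G_1|+2C_1-2C_2$ with the correct ordered-to-unordered factor of $2$ on $C_1$ and $C_2$. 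Recovering $\var(R_1)$ and $\cov(R_1,R_2)$ from the orthogonal decomposition into $R_1\pm R_2$ is then exact, not just asymptotic. What your approach buys is a shorter, less error-prone bookkeeping (one second moment instead of two, and the orthogonality $\cov(R_1+R_2,R_1-R_2)=0$ comes for free rather than emerging from cancellation of two long expressions); it also delivers Corollary 2 and the diagonalization underlying Theorem 3 ($Z_g=Z_m^2+Z_s^2$) as byproducts. What the paper's direct enumeration buys is that the intermediate quantities $\bE(R_1^2)$ and $\bE(R_1R_2)$ are displayed explicitly, which is the form reused in the variance/covariance bookkeeping of the asymptotic proof in Appendix A.
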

\begin{remark}
The analytical expressions for the expectation and variance of $(R_1,R_2)^\top$ differ significantly from those under the permutation null distribution for the two-sample test setting. 
{
Notably, these expressions depend solely on $G_1$, the set of edges connecting observations from different pairs, rather than the entire graph $G$. 
The exclusion of $G\setminus G_1$ (the portion of \(G\) not included in \(G_1\)) can be seen as follows.
First, under the paired-comparison permutation null distribution, the two endpoints of an edge connecting observations from the same pair will always belong to different samples. As a result, the subgraph \(G \setminus G_1\) has no influence on the test statistic.  
Second, the edges in \(G \setminus G_1\) predominantly indicate that differences within paired observations are generally smaller compared to those between non-paired observations. However, these edges provide limited insight into subtle differences within paired observations.}
\end{remark}

\begin{proof}
First, notice that for $j\neq i$ and $j\neq i^*$, $g_i$ and $g_j$ are independent. Thus,
\[
\bP(g_i=g_j=1) = \bP(g_i=1)\bP(g_j=1)=\frac{1}{4}.
\]
Additionally, note that $g_i\neq g_{i^*}$ always holds. Therefore, $R_1 = \sum_{(i,j)\in G_1}I(g_i=g_j=1)$ and $R_2 = \sum_{(i,j)\in G_1}I(g_i=g_j=0)$.
Then
$\bE(R_1) = \sum_{(i,j)\in G_1}\bP(g_i=g_j=1) = \frac{1}{4}|G_1|.$ Similarly, $\bE(R_2)=\frac{1}{4}|G_1|$.

To compute $\var(R_1)$, we first determine $\bE(R_1^2)$:
\begin{align}\label{eq:ER1sq}
& \bE(R_1^2) =  \sum_{(i,j)\in G_1}\bP(g_i=g_j=1) + \sum_{\substack{(i,j),(i,u)\in G_1 \\ j\neq u}}\bP(g_i=g_j=g_u=1) \notag\\
& \hspace{15mm} + \sum_{\substack{(i,j),(u,v)\in G_1 \\ i,j,u,v \text{ all different}}}\bP(g_i=g_j=g_u=g_v=1)\notag\\
&\hspace{-3mm} =\frac{1}{4}|G_1|+ \sum_{\substack{(i,j),(i,u)\in G_1 \\ j\neq u}}\bP(g_i=g_j=g_u=1) + \sum_{\substack{(i,j),(u,v)\in G_1 \\ i,j,u,v \text{ all different}}}\bP(g_i=g_j=g_u=g_v=1).
\end{align}
We next figure out $\sum_{\substack{(i,j),(i,u)\in G_1 \\ j\neq u}}\bP(g_i=g_j=g_u=1)$ and $\sum_{\substack{(i,j),(u,v)\in G_1 \\ i,j,u,v \text{ all different}}}\bP(g_i=g_j=g_u=g_v=1)$. For $(i,j),(i,u)\in G_1,j\neq u$, it is clear that $i$ and $j$ are from different pairs, and $i$ and $u$ are from different pairs. Since $j\neq u$, if $j$ and $u$ are from the same index pair, then $g_j\neq g_u$. Hence, 
\begin{align}\label{eq:p1}
&\sum_{\substack{(i,j),(i,u)\in G_1 \\ j\neq u}}\bP(g_i=g_j=g_u=1) =  \sum_{\substack{(i,j),(i,u)\in G_1 \\ j\neq u,j\neq u^*}}\bP(g_i=g_j=g_u=1) 
= \frac{1}{8}\sum_{\substack{(i,j),(i,u)\in G_1 \\ j\neq u}}I(j\neq u^*) \notag \\
&= \frac{1}{8}\sum_{i=1}^N\left\{|G_{1,i}|(|G_{1,i}|-1)-\sum_{j,u\in G_{1,i}}I(j=u^*)\right\}
= \frac{1}{8}\left(\sum_{i=1}^N|G_{1,i}|^2-2|G_1|-2C_2\right).
\end{align}
For $(i,j),(u,v)\in G_1$, $i,j,u,v$ all different, by similar arguments as above, if some of $i,j,u,v$ are from the same index pair, we could not have $g_i=g_j=g_u=g_v$. Hence, 
\begin{align*}
&\sum_{\substack{(i,j),(u,v)\in G_1 \\ i,j,u,v \text{ all different}}} \bP(g_i=g_j=g_u=g_v=1) =  \sum_{\substack{(i,j),(u,v)\in G_1 \\ i,j,u,v \text{ all different index pair}}}\bP(g_i=g_j=g_u=g_v=1)\\
& = \frac{1}{16}\left\{\sum_{\substack{(i,j),(u,v)\in G_1 \\ i,j,u,v \text{ all different}}}1 -\sum_{\substack{(i,j),(u,v)\in G_1 \\ i,j,u,v \text{ all different}}} I(u=i^*,v=j^*)  -\sum_{\substack{(i,j),(u,v)\in G_1 \\ i,j,u,v \text{ all different}}} I(u\neq i^*,v=j^*)\right\}.
\end{align*}
Since
\begin{align*}
& \sum_{\substack{(i,j),(u,v)\in G_1 \\ i,j,u,v \text{ all different}}}1 = |G_1|^2-|G_1|-\sum_{i=1}^N|G_{1,i}|(|G_{1,i}|-1), 
\sum_{\substack{(i,j),(u,v)\in G_1 \\ i,j,u,v \text{ all different}}} I(u=i^*,v=j^*)=2C_1, \\
& \sum_{\substack{(i,j),(u,v)\in G_1 \\ i,j,u,v \text{ all different}}} I(u\neq i^*,v=j^*) \\
& = \sum_{(i,j)\in G_1}(|G_{1,i^*}|+|G_{1,j^*}|-I(i\in G_{1,j^*})-I(j\in G_{1,i^*}) -2I(i^*\in G_{1,j^*}))\\
& =  2\sum_{i=1}^n|G_{1,i}||G_{1,i^*}|-2C_2-4C_1,
\end{align*}
we have 
\begin{align}\label{eq:p2}
& \sum_{\substack{(i,j),(u,v)\in G_1 \\ i,j,u,v \text{ all different}}} \bP(g_i=g_j=g_u=g_v=1) \notag \\
= & \frac{1}{16}\left(|G_1|^2+|G_1|-\sum_{i=1}^N|G_{1,i}|^2-2\sum_{i=1}^n|G_{1,i}||G_{1,i^*}| + 2C_1+2C_2\right).
\end{align}
So plugging (\ref{eq:p1}) and (\ref{eq:p2}) into (\ref{eq:ER1sq}), we obtain
\[
\bE(R_1^2)=
\frac{1}{16}\left(|G_1|^2+|G_1|+\sum_{i=1}^N|G_{1,i}|^2-2\sum_{i=1}^n|G_{1,i}||G_{1,i^*}|  + 2C_1-2C_2\right).
\]

\allowdisplaybreaks
For $\cov(R_1,R_2)$, we only need to figure out $\bE(R_1R_2)$. We have 
\begin{align*}
\bE(R_1R_2) = & \sum_{\substack{(i,j),(u,v)\in G_1 \\ i,j,u,v \text{ all different}}}\bP(g_i=g_j=1,g_u=g_v=0) \\
= & \sum_{\substack{(i,j),(u,v)\in G_1 \\ i,j,u,v \text{ all different index pair}}}\bP(g_i=g_j=1,g_u=g_v=0) \\
& + \sum_{\substack{(i,j),(u,v)\in G_1 \\ i,j,u,v \text{ belong to three different index pairs}}}\bP(g_i=g_j=1,g_u=g_v=0) \\
& +\sum_{\substack{(i,j),(u,v)\in G_1 \\ i,j,u,v \text{ belong to two different index pairs}}}\bP(g_i=g_j=1,g_u=g_v=0) \\
= & \frac{1}{16}\left(|G_1|^2+|G_1|-\sum_{i=1}^N|G_{1,i}|^2-2\sum_{i=1}^n|G_{1,i}||G_{1,i^*}|  + 2C_1+2C_2\right)\\
& + \frac{1}{8}\left(2\sum_{i=1}^n|G_{1,i}||G_{1,i^*}|-2C_2-4C_1\right)+\frac{2}{4}C_1\\
= & \frac{1}{16}\left(|G_1|^2+|G_1|-\sum_{i=1}^N|G_{1,i}|^2+2\sum_{i=1}^n|G_{1,i}||G_{1,i^*}|  +2C_1-2C_2\right).
\end{align*}
\end{proof}


To ensure that the proposed test statistic $D$ is well defined, $\Sigma_R$ needs to be invertible.
\begin{theorem}\label{th:graph}
The proposed test statistic $D$ is well defined except the following two scenarios: 
\begin{enumerate}
\item For each pair $\ii$, the two nodes have the same degree in $G_1$, i.e., $|G_{1,i}|-|G_{1,i^*}|= 0$ for all $i$; 
\item $|G_1|+2C_1-2C_2 =0$.
\end{enumerate}
\end{theorem}
\begin{remark}
This theorem follows directly from the analytic expression of $\Sigma_R$ derived in Theorem \ref{th:expression0}. After  simplifications, the determinant of $\Sigma_R$ is given by
\[
|\Sigma_R|=\frac{1}{64}(|G_1|+2C_1-2C_2)\sum_{i=1}^n(|G_{1,i}|-|G_{1,i^*}|)^2.
\]
Hence, $D$ is well defined except when $|\Sigma_R|=0$, which occurs in the two scenarios outlined in Theorem \ref{th:graph}.

Let $subG_1^{\ii,\jj}$ be the subgraph of $G_1$ that connects any nodes in the pairs $\ii$ and $\jj$. 
If there is at least one edge in $subG_1^{\ii,\jj}$, then it has eight possible configurations, as shown in Fig.\ \ref{paired}. 

\begin{figure}[!h]
  \centering
  \includegraphics[width=0.95\textwidth]{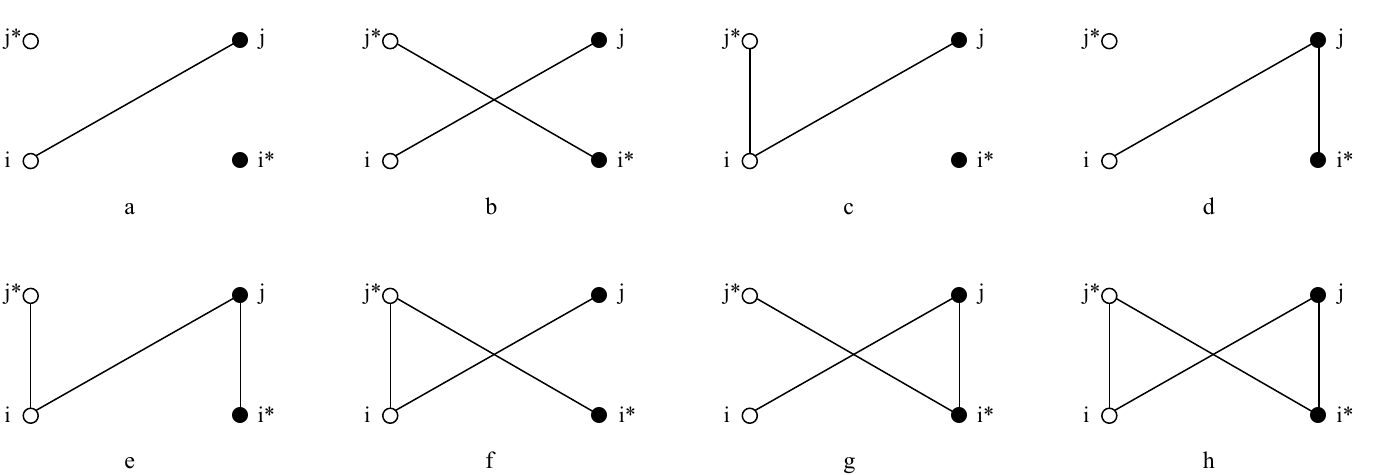}
  \caption{For $(i,j)\in G_1$, eight possible subgraphs of $G_1$ between pairs $\ii$ and $\jj$.}\label{paired}
\end{figure}

Notice that
\[
|G_1|+2C_1-2C_2 = \sum_{\substack{\text{pairs }\ii, \jj \\ |subG_1^{\ii,\jj}|>0}} T(subG_1^{\ii,\jj}),
\]
where
\begin{align*}
T(subG_1^{\ii,\jj}) := &|subG_1^{\ii,\jj}| + 2\times\text{pairs of two edges in }subG_1^{\ii,\jj} \text{ not sharing any node } \\
& - 2\times \text{pairs of two edges in }subG_1^{\ii\jj} \text{ sharing a node}.
\end{align*}
It is straightforward to verify that
\[
T(c) = T(d) = T(h) = 0,\quad T(a) = T(e) = T(f) = T(g) = 1,\quad T(b) = 4.
\]
Hence, if every subgraph $subG_1^{\ii,\jj}$ belongs to one of the three configurations $c$, $d$, or $h$, then $\Sigma_R$ is noninvertible.
\end{remark}

\subsection{Asymptotics}\label{sec:asym}
For the critical value $C(\alpha)$, it can be determined by performing the paired-comparison permutation directly.
However, this approach is time-consuming. 
To make the test more practical and application-friendly, we study the asymptotic distribution of the statistic $D$. 

Before stating the theorem, we define two additional terms on the similarity graph $G_1$:
For an edge $e\in G_1$, let $e_-$ and $e_+$ be the indices of the nodes connected by the edge $e$.
\begin{align*}
& A_e = \{(i,j)\in G_1:i\in \{e_-,e_+,e_-^*,e_+^*\} \text{ or } j\in\{e_-,e_+,e_-^*,e_+^*\}\}, \\
& B_e = \cup_{\tilde e\in A_e}A_{\tilde e}.
\end{align*}
We use $a = O(b)$ to denote that $a$ and $b$ are of the same order, and $a = o(b)$ to denote that $a$ is of a smaller order than $b$. 

To derive the asymptotic behavior of our statistic, we work under the following conditions for some $\gamma>0$:
\begin{condition}\label{cond1}
$\sum_{e\in G_1}|A_e||B_e|=o(N^{1.5\gamma})$.
\end{condition}
\begin{condition}\label{cond2}
$\sum_{i=1}^n(|G_{1,i}|-|G_{1,i^*}|)^2=O(N^\gamma)$.
\end{condition}
\begin{condition}\label{cond3}
$|G_1|+2C_1-2C_2=O(N^\gamma)$.
\end{condition}
\begin{remark}\label{rem:conds}
The parameter $\gamma$ can be any positive number. For example, when $\gamma=1$, the conditions above simplify to the following conditions (\ref{cond:1}), (\ref{cond:2}), and (\ref{cond:3}), respectively.
\begin{equation}\label{cond:1}
\sum_{e\in G_1}|A_e||B_e|=o(N^{1.5}),
\end{equation}
\begin{equation}\label{cond:2}
\sum_{i=1}^n(|G_{1,i}|-|G_{1,i^*}|)^2=O(N),
\end{equation}
\begin{equation}\label{cond:3}
|G_1|+2C_1-2C_2=O(N).
\end{equation}
Here, Condition (\ref{cond:1}) imposes a constraint on the number of edges sharing a pair in the graph $G_1$, ensuring that this number is not too large. A similar condition was proposed for graph-based statistics for independent observations and discussed in \citet{chen2017new} and \citet{chen2018weighted}.

Conditions (\ref{cond:2}) and (\ref{cond:3}) ensure that $(R_1,R_2)$ does not degenerate asymptotically.
Let 
\[
L_1 = \{\ii=(i\wedge i^*,i\vee i^*): |G_{1,i}|\neq |G_{1,i^*}|\}.
\]
If $|L_1|=O(N)$ and $(|G_{1,i}|-|G_{1,i^*}|)^2=O(1),~\ii\in L_1$, then (\ref{cond:2}) is satisfied.

Condition (\ref{cond:3}) places a constraint on the structure of the graph $G_1$. As shown in the proof of Theorem \ref{th:graph}, we obtain
\[
|G_1|+2C_1-2C_2 = \sum_{\substack{\text{pairs }\ii, \jj \\ |subG_1^{\ii,\jj}|>0}} T(subG_1^{\ii,\jj}),
\]
where $T(subG_1^{\ii,\jj})=0,1$, or $4$. Let 
\[
L_2 = \{subG_1^{\ii,\jj}: subG_1^{\ii,\jj}\text{ contains at least one edge, i.e., }|subG_1^{\ii,\jj}|>0\},
\]
\[
L_3 = \{subG_1^{\ii,\jj}\in L_2: T(subG_1^{\ii,\jj})\neq 0\}.
\]
We have 
\[
|G_1|/4\leq|L_2|\leq |G_1|.
\]
Thus, $|L_2|=O(|G_1|)$. If $|G_1|=O(N)$ and $|L_3|=O(|L_2|)$, then (\ref{cond:3}) is satisfied.
\end{remark}

\begin{theorem}\label{th:asym}
 Under Conditions \ref{cond1}, \ref{cond2}, and \ref{cond3}, as $N\rightarrow\infty$, $((R_1-\bE(R_1))/\sqrt{\var(R_1)}$, $(R_2-\bE(R_2))/\sqrt{\var(R_2)})^\top$ converges in distribution to a bivariate Gaussian distribution under the paired-comparison permutation null distribution.
\end{theorem}
The proof of Theorem \ref{th:asym} is provided in \ref{pf:asym}. 
Based on Theorem \ref{th:asym}, the asymptotic distribution of $D$ can be easily derived.
\begin{corollary}\label{corollary}
Under Conditions \ref{cond1}, \ref{cond2}, and \ref{cond3}, as $N\rightarrow\infty$, 
\[
\quad D\longrightarrow\chi_2^2
\]
in distribution under the paired-comparison permutation null distribution.
\end{corollary}

We reject the null hypothesis at the significance level $\alpha$ when $D > C(\alpha)$. Based on Corollary~\ref{corollary}, $C(\alpha)$ can be approximated by $\chi_2^2(1-\alpha)$, the $(1-\alpha)$-quantile of the $\chi_2^2$ distribution.

{ 
To evaluate how well the $\chi^2_2$ distribution approximates the finite-sample distribution of $D$, we compare the two using quantile-quantile (QQ) plots. Specifically, we consider the following three data-generating settings:
\begin{itemize}
  \item Setting 1 (S1): $X_i \stackrel{\text{iid}}{\sim} \mathcal{N}_d(\mathbf{0}_d, \mathbf{I}_d)$;
  \item Setting 2 (S2): $X_i \stackrel{\text{iid}}{\sim} \text{multivariate } t_3(\mathbf{0}_d, \mathbf{I}_d)$;
  \item Setting 3 (S3): $X_i \stackrel{\text{iid}}{\sim}$ multivariate Laplace distribution with mean $\mathbf{0}_d$ and covariance $\mathbf{I}_d$.
\end{itemize}

For each setting, we generate data for 1000 subjects, randomly assigning $n$ subjects to treatment and the remainder to control. We then match $n$ control subjects to the treated subjects using their propensity scores. We consider $n = 50$ and dimensions $d = 50$ and $d = 100$. For each configuration, 1000 simulation runs are conducted.
Here, we use the 5-MST based on Euclidean distance as the similarity graph for computing the statistic $D$. 

Fig.\ \ref{fig:qqplot} displays QQ plots comparing the sample quantiles of $D$ to the theoretical $\chi^2_2$ quantiles, along with 95\% confidence bands (computing using the R function \texttt{stat\_qq\_band} in \texttt{ggplot2}). The plots show that the sample quantiles of $D$  closely follow those of $\chi^2_2$, with slight deviations in the upper tail. In all scenarios, the correlations between the sample and theoretical quantiles exceed 0.996.

\begin{figure}[!h]
\centering
\includegraphics[width=\textwidth]{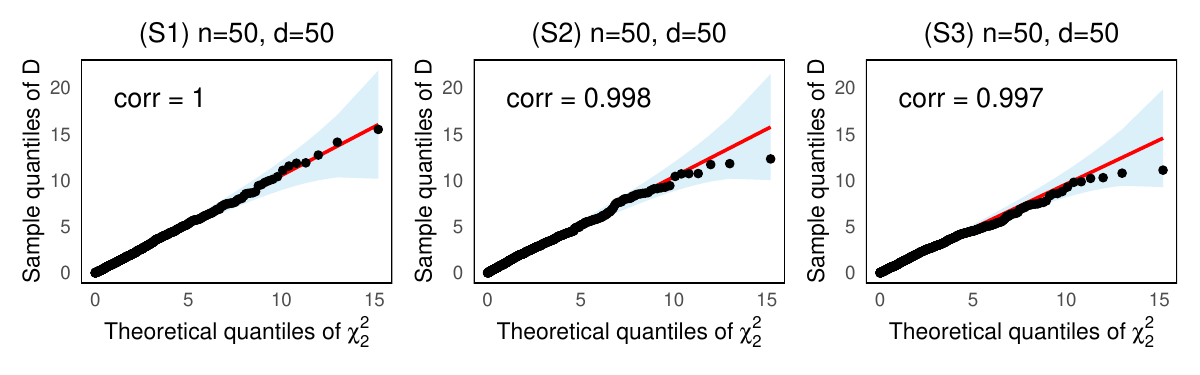}
\includegraphics[width=\textwidth]{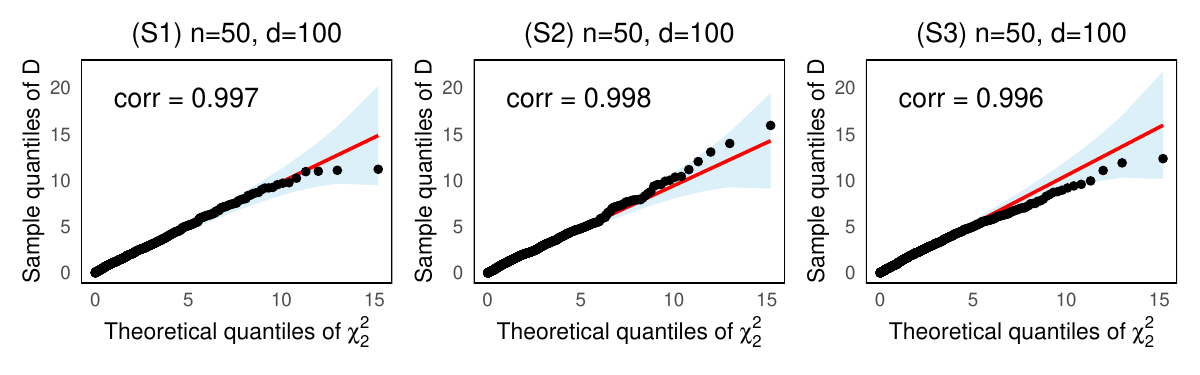}
\caption{{Quantile-quantile plots comparing the empirical distribution of $D$ with the theoretical $\chi^2_2$ distribution under different simulation settings. 
The red line represents the ideal reference line, and the shaded blue area indicates the 95\% confidence band.} } 
\label{fig:qqplot}
\end{figure}

We next evaluate how well the rejection rule $D > \chi_2^2(1-\alpha)$ controls the Type I error rate. Using the same three data-generating settings, we vary the number of matched pairs $n = 50,\, 100,\, 150$ and the dimensions $d = 10,\, 50,\, 100$. For each configuration, we conduct 1000 simulation runs.
Table~\ref{tab:check1} reports the empirical Type I error rates under nominal significance levels of 0.05 and 0.1, where the empirical size is defined as the proportion of simulations in which $D > \chi_2^2(1-\alpha)$. The results show that Type I error is well controlled across all settings, even with relatively small sample sizes.

Together, the QQ plots and Type I error evaluations demonstrate the accuracy of the asymptotic $\chi_2^2$ approximation. However, when sample sizes are very small, the approximation may become less reliable. 
In such cases, paired-comparison permutations can be used to generate an empirical null distribution for $D$ and provide more accurate $p$-value.
}

\begin{table}[!h]
\centering
\caption{
Empirical sizes for data generated under different distributions: multivariate normal (S1), multivariate $t$ (S2), and multivariate Laplace (S3), based on 1000 simulation runs
}\label{tab:check1} 
(a) Empirical size at the 0.05 nominal level 
\smallskip

 \begin{tabular}{lccccccccc}\hline\smallskip
   & \multicolumn{3}{c}{$d=10$} & \multicolumn{3}{c}{$d=50$} & \multicolumn{3}{c}{{$d=100$}}\\ 
   & $n=50$ & $n=100$ & $n=150$ & $n=50$ & $n=100$ & $n=150$ & $n=50$ & $n=100$ & $n=150$ \\ 
   (S1) & 0.045 & 0.050 & 0.045 & 0.057 & 0.050 & 0.053 & 0.058 &0.058 & 0.054\\
   (S2) & 0.054 & 0.049 & 0.048 & 0.049 & 0.053 & 0.035 & 0.043 &0.033 & 0.044\\ 
   (S3) & 0.043 & 0.046 & 0.046 & 0.034 & 0.041 & 0.053 & 0.045 & 0.043 & 0.046\\ \hline
  \end{tabular} 

\vspace{0.3cm} 
(b) Empirical size at the 0.1 nominal level 
\smallskip

 \begin{tabular}{lccccccccc}\hline\smallskip
   & \multicolumn{3}{c}{$d=10$} & \multicolumn{3}{c}{$d=50$} & \multicolumn{3}{c}{{$d=100$}} \\ 
   & $n=50$ & $n=100$ & $n=150$ & $n=50$ & $n=100$ & $n=150$ & $n=50$ & $n=100$ & $n=150$\\ 
  (S1) & 0.100 & 0.101 & 0.100 & 0.114 & 0.091 & 0.105 & 0.104 &0.110 & 0.100\\ 
  (S2) & 0.096 & 0.111 & 0.109 & 0.098 & 0.108 & 0.096 & 0.097 &0.094 &0.096\\ 
  (S3) & 0.106 & 0.116 & 0.099 & 0.078 & 0.095 & 0.105 & 0.106 &0.096 &0.105\\ \hline
  \end{tabular} 
\end{table}

\begin{theorem}\label{th:consistency}
If $X$ and $Y$ are independently drawn from two continuous multivariate distributions, the graph $G$ is a $k$-MST $(k = O(1))$ based on the Euclidean distance, and Conditions \ref{cond2} and \ref{cond3} are satisfied with $\gamma=1$, then the test based on $D$ is consistent against all alternatives in the
usual limiting regime.
\end{theorem}
The proof of Theorem \ref{th:consistency} is provided in \ref{pf:consistency}.


\section{Performance of the proposed test}\label{sec:simu}
In this section, we evaluate the performance of the proposed test $D$ by assessing covariate balance in pair matching and conducting two-sample testing for non-independent paired data. 
{We 
consider the 1-MST, 5-MST, and 10-MST constructed using the Euclidean distance as similarity graphs and denote the corresponding tests as $D1$, $D5$, and $D10$, respectively.
As a sensitivity analysis, we also apply the proposed test using the 1-MST, 5-MST, and 10-MST constructed with the $L_1$ distance. 
The results, which are similar, are provided in \ref{app:addsimu}.}

The proposed test is compared with the multivariate paired Hotelling's $T^2$ test (pHT) and four existing tests for matching: the method of combined differences \citep{hansen2008covariate}, the crossmatch test \citep{rosenbaum2005}, CrossNN, and CrossMST tests \citep{chen2022new}, as well as the Bonferroni-corrected paired $t$-test (BCT) in Section \ref{sec:simu-pairmatch}. 
It is also compared with pHT and BCT in Section \ref{sec:simu-correlated}. 
The significance levels of all tests are set to 0.05. 
For the proposed test $D$, the null hypothesis is rejected when $D>\chi_2^2(0.95)$. 
For the paired Hotelling's $T^2$ test, let $T_i=X_i-Y_i,~i\in\{1,\ldots,n\}$, $\bar T=\sum_{i=1}^nT_i/n$, and $\Sigma_T = \sum_{i=1}^n(T_i-\bar T)(T_i-\bar T)^\top/(n-1)$. The null hypothesis is rejected if $(n-d)n\bar T^\top\Sigma_T^{-1}\bar T/[d(n-1)]>F_{d,n-d}(0.95)$,
where $F_{d,n-d}(0.95)$ denotes the 0.95 quantile of an F-distribution with $d$ and $n-d$ degrees of freedom. 

\subsection{Assessing covariate balance for pair matching}\label{sec:simu-pairmatch}
Following a simulation setting similar to that in \citet{franklin2014metrics}, we assume that $d$-dimensional covariates $\bX_{(1)}=(X_1,\ldots,X_d)^\top$ are observed, where $X_j,~j\in\{1,\ldots,d\}$ are independent and identically distributed according to a Laplace distribution with mean 0 and variance 0.65.
The treatment variable $T$ depends on these $d$ covariates as well as three unobserved transformations of them, $\bX_{(2)}=(X_{d+1},X_{d+2},X_{d+3})^\top=(\sin(X_1)/4,\cos(X_2)/4,X_3^4+X_3^5/5)^\top$. 
We simulate $T$ as a binary variable using the logistic model 
$\text{logit}\{P(T=1)\}=\alpha_0+\balpha_1^\top\bX_{(1)}+\balpha_2^\top\bX_{(2)}$, 
{
where $\alpha_0$ determines the baseline tendency for treatment assignment, and $\balpha_1$ and $\balpha_2$ represent the effects of $\bX_{(1)}$ and $\bX_{(2)}$, respectively, on the log-odds ratio of treatments to controls in the pre-matched data set. 
}
We generate 1000 subjects and determine whether each is treated $(T=1)$ or not $(T=0)$. 
Control subjects $(T=0)$ are then matched to treated subjects $(T=1)$ using propensity scores derived from $\bX_{(1)}$.

\begin{table}[!h]
\centering
\caption{Parameter values for four simulation scenarios. (A) $d=5$; (B) $d=20$} \label{tab:simupara}
\begin{tabular}{l}
      \hline
\multicolumn{1}{c}{(A) Low-dimensional settings ($d=5$)}\\
A1: Zero coefficients for both $\bX_{(1)}$ and $\bX_{(2)}$. \\
\qquad {$\alpha_0=-2.2$}, $\balpha_1=\bzero_{5}$ and $\balpha_2=\bzero_3$. \\
A2: Nonzero coefficients for observed covariates $\bX_{(1)}$ only. \\
\qquad {$\alpha_0=-2.2$}, $\balpha_1=0.2\bone_{5}$ and $\balpha_2=\bzero_{3}$. \\
A3: Nonzero coefficients for unobserved covariates $\bX_{(2)}$ only. \\
\qquad {$\alpha_0=-3$}, $\balpha_1=\bzero_{5}$ and $\balpha_2=0.3\bone_{3}$. \\
A4: Nonzero coefficients for both $\bX_{(1)}$ and $\bX_{(2)}$. \\
\qquad {$\alpha_0=-3$}, $\balpha_1=0.2\bone_{5}$ and $\balpha_2=0.3\bone_{3}$. \\
\hline
\multicolumn{1}{c}{(B) {Moderate-dimensional settings} ($d=20$)}\\
B1: Zero coefficients for both $\bX_{(1)}$ and $\bX_{(2)}$. \\
\qquad {$\alpha_0=-2.2$}, $\balpha_1=\bzero_{20}$ and $\balpha_2=\bzero_3$. \\
B2: Nonzero coefficients for observed covariates $\bX_{(1)}$ only. \\
\qquad {$\alpha_0=-2.4$}, $\balpha_1=0.2\bone_{20}$ and $\balpha_2=\bzero_{3}$. \\
B3: Nonzero coefficients for unobserved covariates $\bX_{(2)}$ only. \\
\qquad {$\alpha_0=-3.2$}, $\balpha_1=\bzero_{20}$ and $\balpha_2=0.45\bone_{3}$. \\
B4: Nonzero coefficients for both $\bX_{(1)}$ and $\bX_{(2)}$. \\
\qquad {$\alpha_0=-3.5$}, $\balpha_1=0.2\bone_{20}$ and $\balpha_2=0.45\bone_{3}$. \\\hline
\end{tabular}
\end{table}

We consider both low-dimensional settings with $d=5$ and {moderate-dimensional settings} with $d=20$. 
Table \ref{tab:simupara} presents the parameter values for various scenarios.
{ The values of $\alpha_0$ are chosen to ensure that the number of pairs remains approximately 100 across all settings.}
For each scenario, we simulate 1000 data sets. 
Scenarios A1 and B1 are used to examine the empirical size, as the covariates of treated subjects and controls are generated from the same distribution.
The standardized mean difference is defined as $\text{SD1}=(\bar x_1-\bar x_0)/\sqrt{(s_1^2+s_0^2)/2}$, where $\bar x_m$ and $s_m^2$ are the sample mean and variance for treated subjects $(m=1)$ and controls $(m=0)$. 
For each covariate, we compute its standardized mean difference between the treatment and control groups, both before and after matching. 
Figs.\ \ref{fig:simu-match1} and \ref{fig:simu-match2} show the boxplots of standardized mean differences for 1000 data sets. The left panels represent differences before matching, while the right panels show differences after matching.
The results indicate that the standardized mean differences for the observed covariates $\bX_{(1)}$ are relatively close to 0 after matching under all scenarios. However, the unobserved covariates $\bX_{(2)}$ remain significantly unbalanced after matching in scenarios where $\balpha_2\neq \bzero_3$ (scenarios A3, A4, B3, B4), suggesting that the distributions of ($X_1,X_2,X_3$) are not well balanced. 
As our objective is to test whether the joint distributions of the covariates in the matched control and treatment groups are identical, we expect a reliable test to reject the null hypothesis in scenarios A3, A4, B3, and B4, while accepting it in scenarios A1, A2, B1, and B2.

\begin{figure}[!h]
\centering
\includegraphics[width=\textwidth]{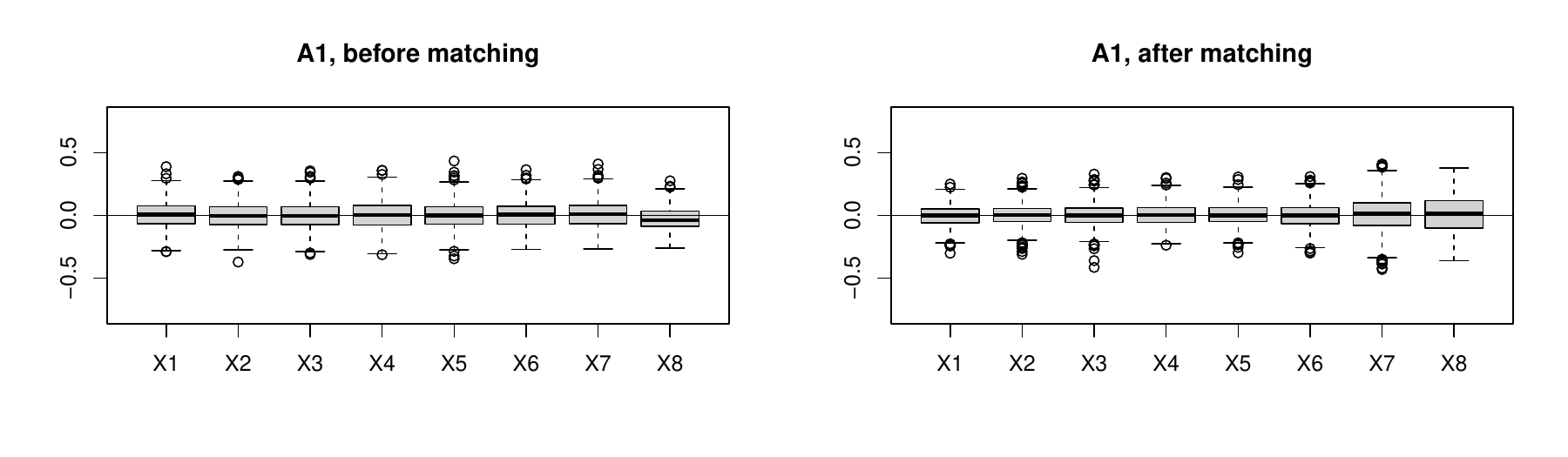}
\includegraphics[width=\textwidth]{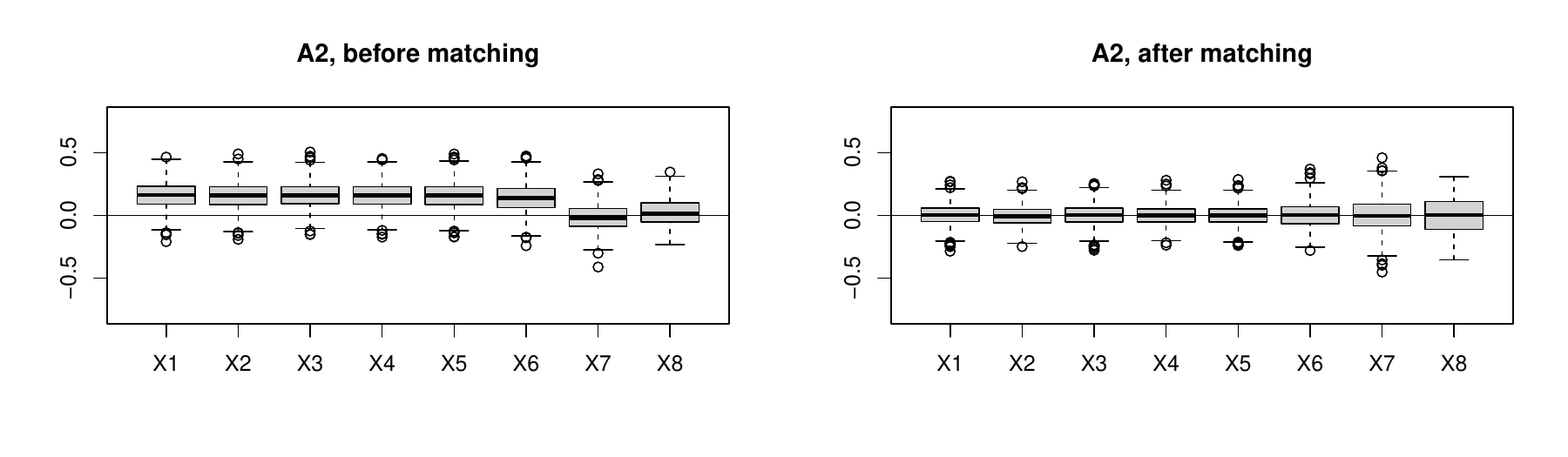}
\includegraphics[width=\textwidth]{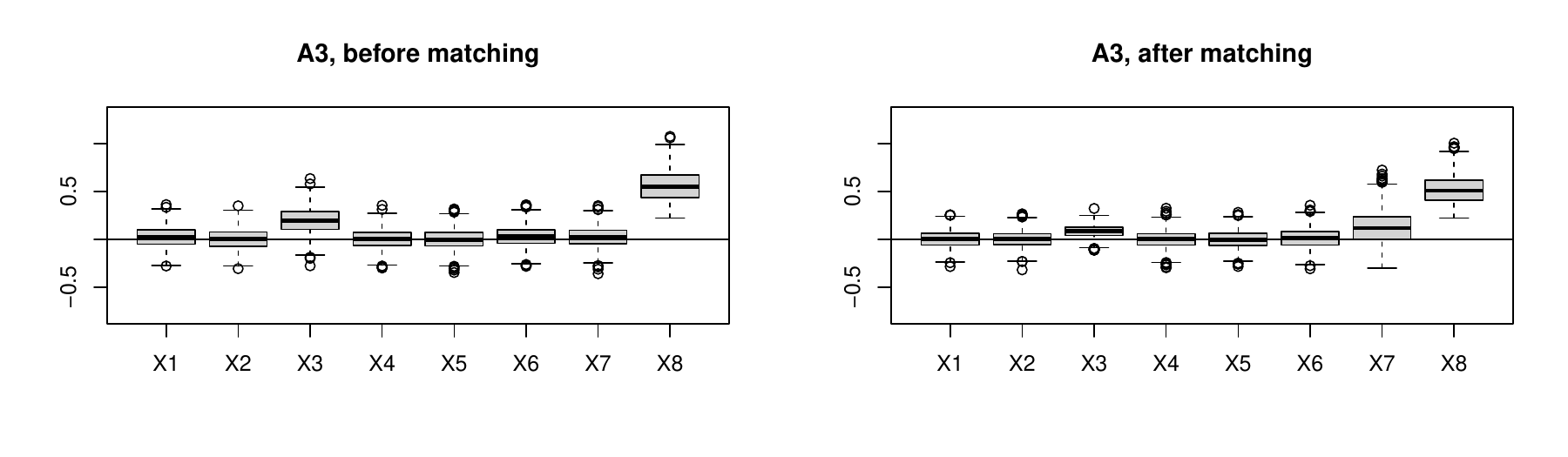}
\includegraphics[width=\textwidth]{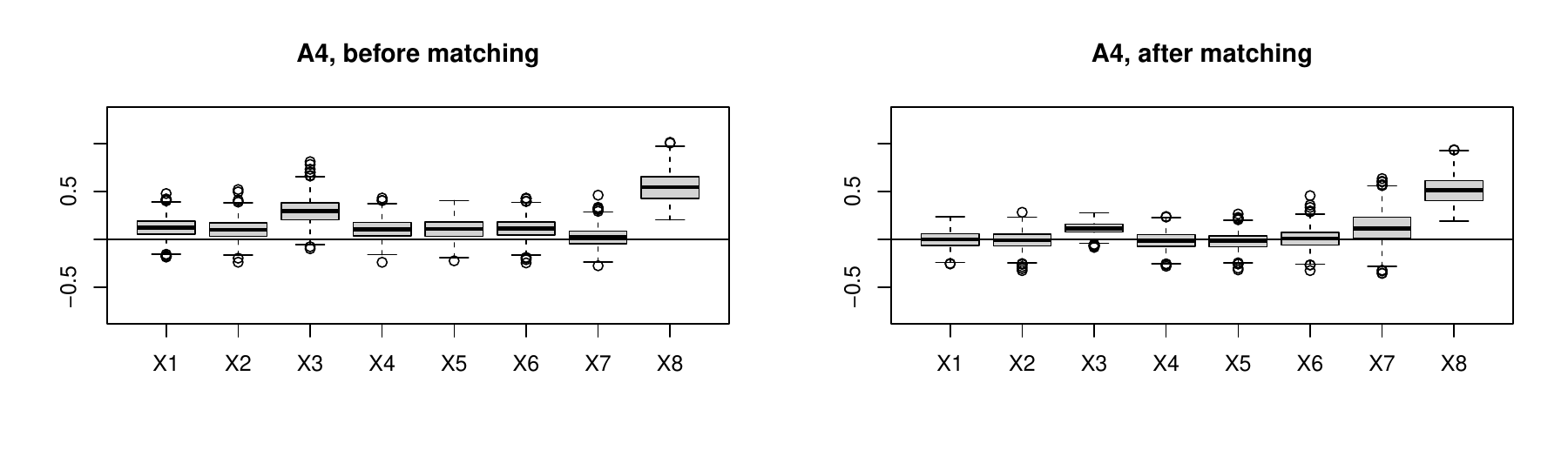}
\caption{Boxplots of standardized mean differences between the treatment and control groups over 1000 runs for each scenario under low-dimensional settings. Left panel: before matching; right panel: after matching.}\label{fig:simu-match1}
\end{figure}

\begin{figure}[!h]
\centering
\includegraphics[width=\textwidth]{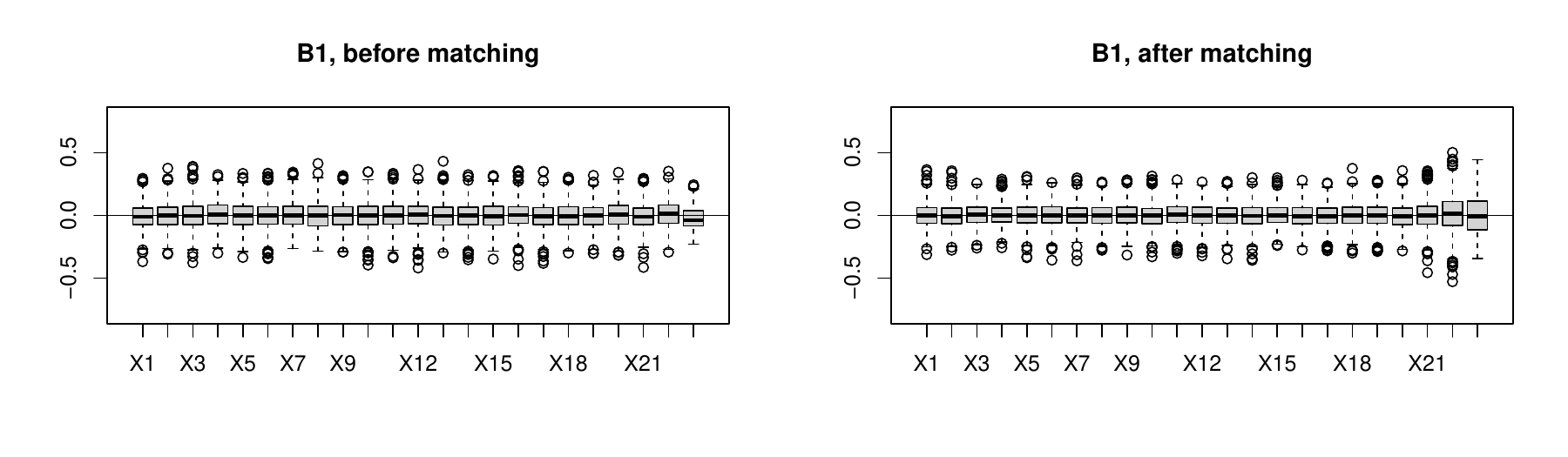}
\includegraphics[width=\textwidth]{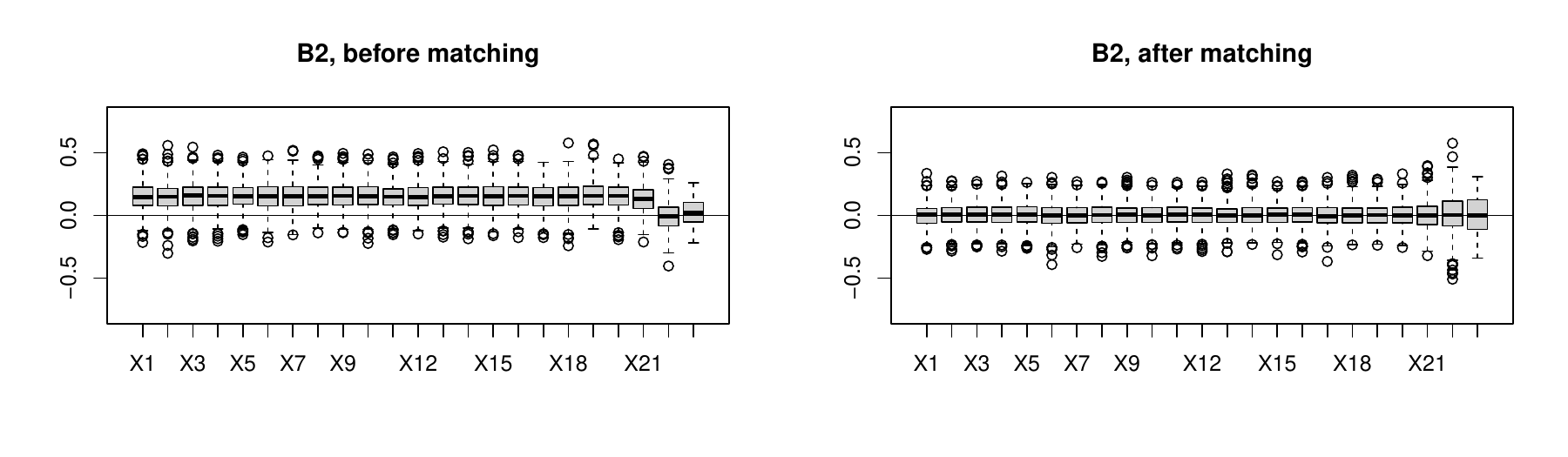}
\includegraphics[width=\textwidth]{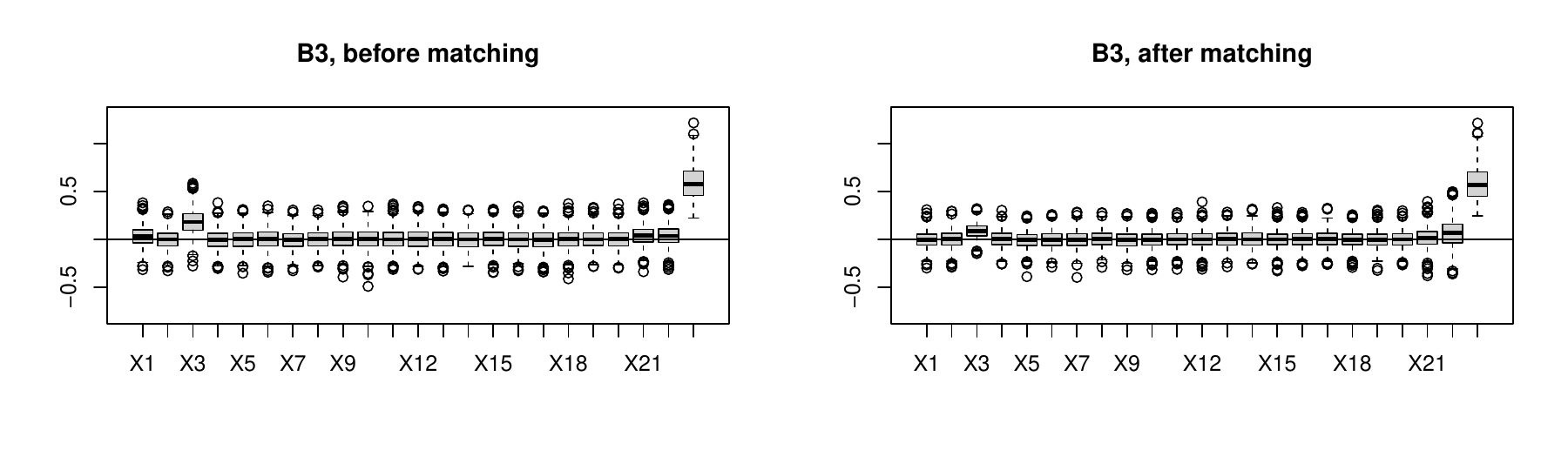}
\includegraphics[width=\textwidth]{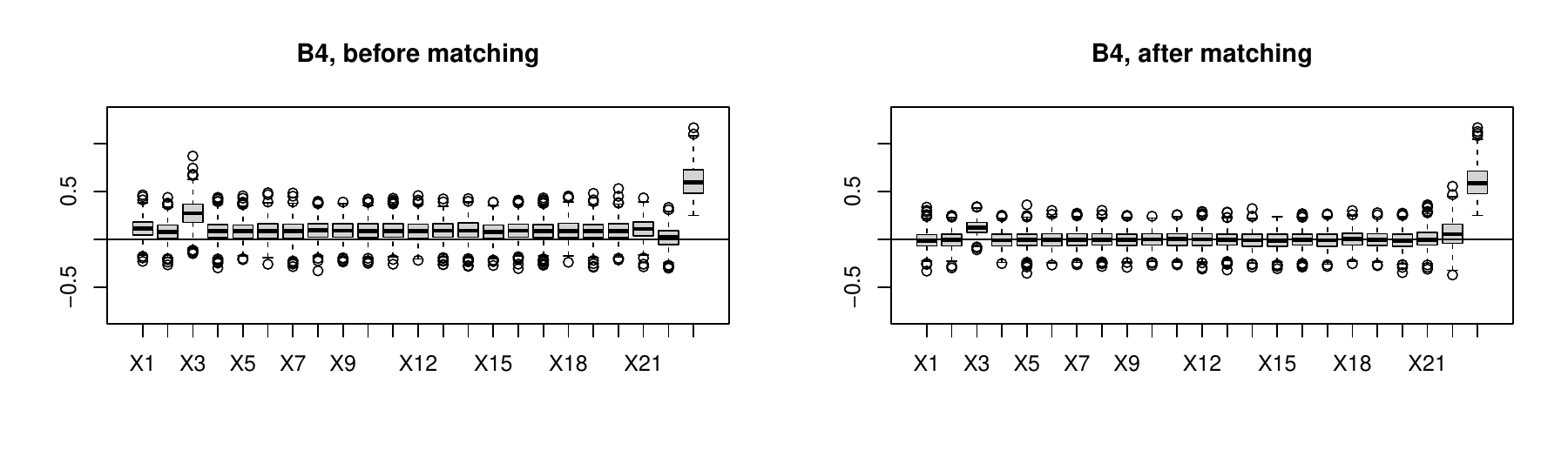}
\caption{Boxplots of standardized mean differences between the treatment and control groups over 1000 runs for each scenario under moderate-dimensional settings. Left panel: before matching; right panel: after matching.}\label{fig:simu-match2}
\end{figure}

\begin{table}[!h]
\centering
\caption{{The average number of pairs (standard deviation in parenthesis)} and the proportion of trials { (out of 1000)} in which the test rejects covariate balance at the 0.05 significance level. The largest estimated power under scenarios A3, A4, B3, and B4 is shown in bold}\label{tab:simu-match}
\begin{tabular}{@{}llccccccccccc@{}}\hline
 & {n} & pHT & CD & CM & CNN1 & CNN5 & CMST1 & CMST5 & BCT & {$D1$} & {$D5$} & {$D10$}\\
& \multicolumn{12}{c}{(A) Low-dimensional settings ($d=5$)} \\
A1 & {100(9)} & 0.002 & 0.001 & 0.026 & 0.020 &0.024 & 0.022 & 0.024 & 0.000 & 0.044 & 0.051 & 0.046 \\
A2 & {104(10)} & 0.000 & 0.000 & 0.029 & 0.022 & 0.014 & 0.022 & 0.021 & 0.000 & 0.040 &0.042 & 0.046 \\ \hline
A3 & {98(10)} & 0.275 & 0.267 & 0.569 & 0.553 & 0.857 & 0.600 & 0.750 & 0.132 & 0.631 &0.899 & \textbf{0.914} \\
A4 & {101(10)} & 0.470 & 0.452 & 0.578 & 0.574 & 0.864 & 0.592 & 0.784 & 0.166 & 0.631 &0.917 & \textbf{0.946} \\ \hline
& \multicolumn{12}{c}{(B)  Moderate-dimensional settings ($d=20$)} \\
B1 & {100(9)} & 0.000 & 0.000 & 0.025 & 0.027 &0.025 & 0.029 & 0.029 & 0.000 & 0.050 & 0.051 & 0.050 \\
B2 & {100(9)} & 0.000 & 0.000 & 0.025 & 0.031 &0.028 & 0.030 & 0.028 & 0.000 & 0.051 & 0.048 & 0.055 \\ \hline
B3 & {104(10)} & 0.047 & 0.031 & 0.517 & 0.439 & 0.559 & 0.394 & 0.429 & 0.017 & 0.536 &0.836 & \textbf{0.875} \\
B4 & {98(9)} & 0.075 & 0.058 & 0.491 & 0.426 &0.543 & 0.382 & 0.407 & 0.016 & 0.535 & 0.839 & \textbf{0.879} \\\hline
\end{tabular}
\end{table}

We denote the method of combined differences and the crossmatch test as CD and CM, respectively. To apply the CrossNN and CrossMST tests, we consider 1-NN, 5-NN, 1-MST, and 5-MST as the similarity graphs, and denote these tests as CNN1, CNN5, CMST1, and CMST5, respectively.
Here, $k$-NN refers to the $k$-nearest neighbor graph, where two observations $i$ and $j$ are connected by an edge if the distance between $i$ and $j$ is among the $k$-smallest distances from observation $i$ to other observations.
Table \ref{tab:simu-match} presents the average number of pairs (with standard deviation in parentheses) and the proportion of trials in which the tests reject the null hypothesis at the 0.05 significance level. The table shows that the number of pairs is consistently around 100 across all scenarios.
Under the low-dimensional settings, all tests adequately control the empirical size under scenario A1, indicating that the covariates are well-balanced. In the reasonably balanced scenario A2, all tests similarly suggest good covariate balance.
Under scenarios A3 and A4, the proposed tests using denser similarity graphs ($D5$ and $D10$) perform as expected, showing high power. 
Other tests also demonstrate some power but are less effective compared to the proposed methods.
The results under {moderate-dimensional settings} are similar to those in the low-dimensional case, except that pHT, CD, and BCT exhibit no power under scenarios B3 and B4.

\subsection{Two-sample testing for non-independent paired data}\label{sec:simu-correlated}
We first examine the performance of the proposed test for data from the same family of distributions.
We consider the following three settings.
\begin{itemize}
\item Setting 1: $(X_1,Y_1)^\top,(X_2,Y_2)^\top,\ldots,(X_n,Y_n)^\top\stackrel{iid}{\sim}\cN_{2d}(\nu,\Gamma)$;
\item Setting 2:  $(X_1,Y_1)^\top,(X_2,Y_2)^\top,\ldots,(X_n,Y_n)^\top\stackrel{iid}{\sim} \text{multivariate } t_3(\nu,\Gamma)$;
\item Setting 3:  $\ln(X_1,Y_1)^\top,\ln(X_2,Y_2)^\top,\ldots,\ln(X_n,Y_n)^\top\stackrel{iid}{\sim}\cN_{2d}(\nu,\Gamma)$.
\end{itemize}
Here, $\nu=(\nu_1^\top,\nu_2^\top)^\top$ and $\Gamma = \begin{psmallmatrix} \Gamma_1 & \Gamma_{12} \\ \Gamma_{12} & \Gamma_2 \end{psmallmatrix}$.
Let the number of pairs $n$ be fixed at a moderate size with $n=60$. We assess how the proposed statistic behaves when the dimension is comparable to or larger than the number of pairs, by considering three dimensions: $d=50$, $d=100$, and $d=1000$. For each setting, we consider two alternatives for each $d$, as listed below.
\begin{enumerate}
\item[(i)] Only $\nu_1$ differs from $\nu_2$ with $\nu_1=\bzero_d$, $\nu_2=0.5d^{-1/4}\bone_d$, $\Gamma_1=\Gamma_2=\bI_d$ and $\Gamma_{12}=0.6 \bI_d$.
\item[(ii)] Both $\nu_1$ differs from $\nu_2$ and $\Gamma_1$ differs from $\Gamma_2$ with  $\nu_1=\bzero_d$, $\nu_2=0.5d^{-1/4}\bone_d$, $\Gamma_1=\bI_d$, $\Gamma_2=c_d\bI_d$ and $\Gamma_{12}=0.6c_d^{1/2} \bI_d$,
where $c_d = 1.15$ for $d=50$, $c_d=1.1$ for $d=100$, and $c_d=1.05$ for $d=1000$.
\end{enumerate}

\begin{table}[!h]
\centering
\caption{Estimated power at the 0.05 significance level based on 1000 runs. The largest estimated power under each setting is highlighted in bold.}\label{tab:perform}
(a) Data from the same family of distributions

\smallskip

\begin{tabular}{llcccccc}\hline\smallskip
& & \multicolumn{3}{c}{(i) $\nu_1\neq\nu_2$, $\Gamma_1=\Gamma_2$} & \multicolumn{3}{c}{(ii) $\nu_1\neq\nu_2$, $\Gamma_1\neq\Gamma_2$}\\
& $d$ & 50 & 100 & 1000 & 50 & 100 & 1000\\
multivariate normal & pHT & 0.830 & - & - & 0.764 & - & - \\
& BCT & 0.894 & 0.745 & 0.388 & 0.860 & 0.732 & 0.370 \\
& {$D1$} & 0.150 & 0.161 & 0.214 & 0.884 & 0.865 & \textbf{1.000} \\ 
& {$D5$} & 0.706 & 0.746 & 0.800 & 0.968 &0.979 & \textbf{1.000} \\
& {$D10$} & \textbf{0.906} & \textbf{0.912} & \textbf{0.964} & \textbf{0.996} & \textbf{0.995} & \textbf{1.000} \\
multivariate t & pHT & 0.739 & - & - & 0.702 & - & -\\
& BCT & 0.414 & 0.288 & 0.098 & 0.374 & 0.265 & 0.100 \\
& {$D1$} & 0.260 & 0.261 & 0.342 & 0.833 &0.812 & \textbf{0.944} \\ 
& {$D5$} & 0.862 & 0.827 & 0.558 & 0.957 &0.925 & 0.926 \\
& {$D10$} & \textbf{0.907} & \textbf{0.888} & \textbf{0.582} & \textbf{0.981} & \textbf{0.950} & 0.912 \\
multivariate log-normal & pHT & 0.449 & - & - & 0.613 & - & - \\
& BCT & 0.302 & 0.213 & 0.038 & 0.443 & 0.271 & 0.032 \\
& {$D1$} & 0.751 & 0.778 & 0.926 & 0.982 & 0.984 & 0.990\\
& {$D5$} & 0.822 & 0.821 & 0.874 & 0.994 &0.992 & 0.998 \\
& {$D10$} & \textbf{0.888} & \textbf{0.872} & \textbf{0.940} & \textbf{0.997} & \textbf{0.998} & \textbf{1.000} \\ \hline
\end{tabular}

\vspace{0.3cm}
(b) Data from different families of distributions

\smallskip

\begin{tabular}{llcccccc}\hline\smallskip
& & \multicolumn{3}{c}{$\Delta=\bzero_d$} & \multicolumn{3}{c}{$\Delta=0.2\bone_d$}\\
& $d$ & 50 & 100 & 1000 & 50 & 100 & 1000\\ 
$\tau_i\stackrel{iid}{\sim}\text{normal}$ & pHT & 0.040 & - &  - & 0.462 &- & -\\
& BCT & 0.041 & 0.038 & 0.041 & 0.520 & 0.570 & 0.720 \\
& {$D1$} & 0.549 & 0.610 & 0.811 & 0.656 & 0.711 & 0.979 \\
& {$D5$} & 0.616 & 0.753 & 0.972 & 0.770 &0.910 & \textbf{1.000} \\
& {$D10$} & \textbf{0.640} & \textbf{0.789} & \textbf{0.979} & \textbf{0.852} & \textbf{0.950} & \textbf{1.000} \\
$\tau_i\stackrel{iid}{\sim}\text{skew normal}$ & pHT & 0.051 & - &  - & 0.380 &- & -\\
& BCT & 0.051 & 0.064 & 0.049 & 0.370 &0.386 & 0.434 \\
& {$D1$} & 0.504 & 0.520 & 0.782 & 0.507 & 0.629 & 0.954 \\ 
& {$D5$} & 0.554 & 0.667 & 0.966 & 0.629 &0.809 & \textbf{1.000} \\
& {$D10$} & \textbf{0.593} & \textbf{0.707} & \textbf{0.976} & \textbf{0.719} & \textbf{0.886} & \textbf{1.000} \\
$\tau_i\stackrel{iid}{\sim}\text{Laplace}$ & pHT & 0.045 & - &  - & 0.450 &- & -\\
& BCT & 0.040 & 0.032 & 0.024 & 0.524 & 0.557 & 0.714\\
& {$D1$} & 0.322 & 0.412 & 0.691 & 0.416 & 0.510 & 0.947 \\ 
& {$D5$} & 0.349 & 0.490 & 0.911 & 0.589 &0.785 & \textbf{1.000} \\
& {$D10$} & \textbf{0.385} & \textbf{0.532} & \textbf{0.938} & \textbf{0.699} & \textbf{0.907} & \textbf{1.000} \\\hline
\end{tabular}

\end{table}

The results under various scenarios are summarized in Table \ref{tab:perform}(a). We begin by examining the results for moderate-dimensional settings ($d=50$). When only $\nu_1$ differs from $\nu_2$, both pHT and BCT demonstrate high power for the multivariate normal distribution but perform poorly for the multivariate log-normal distribution.
When both $\nu_1\neq\nu_2$ and $\Gamma_1\neq\Gamma_2$, the performance of pHT and BCT is much worse than the proposed tests $D1$, $D5$ and $D10$, particularly for the multivariate $t$-distribution and the multivariate log-normal distribution.
For higher dimensions ($d=100$ and $d=1000$), $D5$ and $D10$ perform well across all settings. In contrast, BCT only exhibits power for the multivariate normal scenario, and paired Hotelling's $T^2$ cannot be applied when $d>n$.

Additionally, under setting (ii), $\nu_1$ and $\nu_2$ are the same as those under setting (i), while $\Gamma_2$ differs from $\Gamma_1$. As a result, the difference between the two samples becomes more pronounced. Comparing the results under setting (ii) with those under setting (i), we observe an increase in the power of $D1$, $D5$, and $D10$ across the three distributions, while the power of pHT and BCT decreases for the multivariate normal and multivariate $t$-distributions.

Now we examine the performance of the proposed test when the samples $X_i$ and $Y_i$ come from different families of distributions.
Specifically, we generate the data such that
\[
X_i = \alpha_i + \epsilon_i,\quad Y_i = \Delta + \alpha_i + \tau_i,\quad i=1,\ldots,n,
\]
where $\Delta$ is a constant vector, and $\alpha_i,\epsilon_i,\tau_i$ are independent random vectors. 
Here, $\alpha_i$ is drawn from a multivariate normal distribution, $\alpha_i\stackrel{iid}{\sim} \cN_d(\bzero_d,\Omega_1)$, and $\epsilon_i$ follows a multivariate $t$-distribution with 3 degrees of freedom, $\epsilon_i\stackrel{iid}{\sim} t_3(\bzero_d,\Omega_2/3)$. 
We consider three different distributions for $\tau_i$: (i) $\tau_i\stackrel{iid}{\sim}\cN_d(\bzero_d,\Omega_2)$, (ii) $\tau_i\stackrel{iid}{\sim}$ multivariate skew normal distribution with mean $\bzero_d$, variance $\Omega_2$, skewness 1 and (iii) $\tau_i\stackrel{iid}{\sim}$ multivariate Laplace distribution with mean $\bzero_d$, variance $\Omega_2$. Therefore, $\bE(X_i)=\bzero_d$, $\bE(Y_i)=\Delta$, $\var(X_i)=\var(Y_i)=\Omega_1+\Omega_2$ and $\cov(X_i,Y_i)=\Omega_1$. 
We fix the number of pairs at $n=60$ and consider three different dimensions: $d=50,100,1000$. 
Let the covariance matrix $\Omega_1=(\Omega_{ij})$ be defined as $\Omega_{ij}=0.5^{|i-j|}$, and let $\Omega_2=\bI_d$, the identity matrix of dimension $d$.
For each $d$, we consider two settings for $\Delta$: $\Delta=\bzero_d$ and $\Delta=0.2\bone_d$.

Table \ref{tab:perform}(b) presents the results for the case where \(X_i\) and \(Y_i\) are drawn from different families of distributions. We first examine the scenario when there is no mean difference between the two samples (\(\Delta = \bzero_d\)). In this case, the means and variances of \(X_i\) and \(Y_i\) are identical, but the two distributions \(F_X\) and \(F_Y\) differ. A powerful test should reject the null hypothesis, indicating a difference in distributional shapes. 
However, both the pHT and BCT tests show almost no power in this setting, and pHT cannot be applied in high-dimensional scenarios ($d=100$ and $d=1000$). Therefore, they fail to detect the shape difference between the distributions. On the other hand, the proposed tests perform well across all scenarios.
When a mean difference is introduced between the two samples (\(\Delta = 0.2\bone_d\)), pHT exhibits some power under the case $d=50$, but it becomes inapplicable for the high-dimensional scenarios. 
BCT shows some power as well, and the proposed tests \(D5\) and \(D10\) demonstrate consistently high power across all dimensions.

{As seen in the simulation results, the tests \(D5\) and \(D10\) exhibit similar performance, are effective across all scenarios, and are both more powerful than $D1$. Based on these findings, we select \(D5\)--the test using the 5-MST as the similarity graph--for our real application.}

\section{A real application on Alzheimer's disease research}\label{sec:realdata}
In this section, we illustrate the newly developed test in the context of a research project studying Alzheimer's disease. The data were collected by multiple Alzheimer's Disease Centers between September 2005 and December 2018 (see \citet{beekly2007national} for details). Participants provided written informed consent prior to participation. They were evaluated on cognitive performance during their initial visits, followed by approximately annual follow-up visits. The data are recorded in the Uniform Data Set (UDS), a longitudinal, standardized dataset maintained by the National Alzheimer’s Coordinating Center (NACC). More information can be found at \texttt{https://www.alz.washington.edu/}. The dataset consists of approximately 725 variables obtained from nearly 39,412 research volunteers, based on comprehensive evaluations conducted annually as of the December 2018 data freeze. In our study, we selected data from participants with at least four visits. The dataset contains no personally identifiable information.

\subsection{Assessment of covariate balance in pair matching { with respect to sex}}\label{sec:realapp2}
Several studies have previously reported { an association between sex and} Alzheimer's disease \citep{payami1996gender,podcasy2016considering,grimm2016alzheimer}. In this study, we focus on data from the initial visit and consider only white participants who meet the following criteria: no history of stroke, transient ischemic attack, serious heart problems (e.g., atrial fibrillation, cardiac bypass surgery, congestive heart failure), diabetes, brain trauma, Parkinson's disease, seizures, or other neurological or psychiatric disorders; no substance abuse (except alcohol); and no family history of frontotemporal lobar degeneration mutation or Alzheimer's disease mutation, nor any cognitive impairment in first-degree relatives.
In total, we consider 14 covariates (Table \ref{tab:realapp2}(a)). 
{Among these, the covariates \textit{smoking in the past 30 days} (TOB30), \textit{smoking more than 100 cigarettes in life} (TOB100), \textit{history of heart attack} (HATT), and \textit{active depression in the past 2 years} (DEP2YRS) are binary variables, each transformed into a dummy variable.
The covariates \textit{angioplasty, endarterectomy, or stent placement} (ANG) and \textit{alcohol abuse} (ALC) are categorical variables with three levels: absent, recent/active, and remote/inactive.}
These two variables are transformed into two dummy variables, respectively.
We match 105 male participants with 105 female participants from a pool of 153 female participants, aiming to balance the covariates between the female and male groups. 
To achieve this, we use the \texttt{pairmatch()} function from the \texttt{optmatch} package in \textsf{R} \citep{hansen2007optmatch}, with the distances between male and female participants computed using the \texttt{match\_on()} function based on logit propensity scores.
{
\begin{remark}
    { Here, ``sex" does not directly represent a modifiable treatment or intervention. 
    Our intention is to explore associations that might stem from biological, social, or environmental differences correlated with sex, rather than to suggest an intervention to alter sex. Understanding that sex is associated with different risks of Alzheimer's disease can inform tailored screening strategies or preventive measures. For example, healthcare providers might prioritize monitoring or early interventions for at-risk groups based on sex-specific risk profiles.}
\end{remark}
}

\begin{sidewaystable}
\centering
\caption{Summary of covariates and statistics before and after pair matching of a male participant and a female participant}\label{tab:realapp2}
(a) Matching results of Section \ref{sec:realapp2}

\smallskip

\addtolength{\tabcolsep}{-1pt}
\begin{tabular}{@{}llrrrrrrrrr@{}}\hline\smallskip
& & \multicolumn{5}{l}{Before matching} & \multicolumn{4}{l}{After matching} \\ 
Description & Covariates & Male & Female & SD1 & SD2 & SD3 & Female & SD1 & SD2 & SD3 \\ 
Age at visit & AGE & 72.762 & 72.092 & 0.081 & -0.130 & 0.048 & 73.057 & -0.037 & 0.027 & -0.119 \\ 
Body Mass Index & BMI & 26.827 & 26.852 & -0.005 & -0.489 & -0.367 & 26.150 & 0.144 & -0.361 & -0.158 \\ 
Years of smoking history & SMOKYRS & 9.114 & 9.248 & -0.010 & -0.238 & -0.516 & 9.819 & -0.053 & -0.120 & -0.090 \\ 
Systolic blood pressure & BPSYS & 135.152 & 133.915 & 0.068 & -0.079 & 0.091 & 134.343 & 0.045 & -0.039 & 0.271 \\ 
Diastolic blood pressure & BPDIAS & 75.505 & 76.307 & -0.084 & -0.329 & -0.069 & 75.105 & 0.043 & -0.263 & 0.108 \\ 
Smoking in the past 30 days & TOB30 & 0.019 & 0.033 & -0.086 & -0.510 & -2.869 & 0.019 & 0.000 & 0.000 & 0.000\\ 
Smoking more than 100 cigarettes in life & TOB100 & 0.486 & 0.425 & 0.122 & 0.022 & -0.238 & 0.514 & -0.057 & 0.000 & 0.113 \\ 
History of heart attack & HATT & 0.114 & 0.013 & 0.422 & 1.534 & 4.741 & 0.019 & 0.387 & 1.364 & 4.037\\ 
Angioplasty, endarterectomy, or stent placement & ANG &  \\
\quad Recent/Active &  \quad ANG1 & 0.019 & 0.000 & 0.196 & 1.981 & 19.620 & 0.000 & 0.196 & 1.981 & 19.620 \\
\quad Remote/Inactive & \quad ANG2 & 0.095 & 0.026 & 0.291 & 1.078 & 3.414 & 0.019 & 0.331 & 1.275 & 4.252 \\ 
Alcohol abuse & ALC & \\
\quad Recent/Active & \quad ALC1 & 0.019 & 0.007 & 0.111 & 0.960 & 8.080 & 0.010 & 0.080 &0.652 & 5.156 \\
\quad Remote/Inactive & \quad ALC2 & 0.029 & 0.020 & 0.058 & 0.360 & 2.112 & 0.029 & 0.000 & 0.000 & 0.000 \\ 
Active depression in the past 2 years & DEP2YRS & 0.229 & 0.294 & -0.149 & -0.162 & 0.120 & 0.229 & 0.000 & 0.000 & 0.000 \\ 
Average number of packs smoked per day & PACKSPER & 1.152 & 1.059 & 0.057 & -0.164 &-1.039 & 1.257 & -0.061 & -0.321 & -1.457 \\ \hline
\end{tabular}
\vspace{1cm}

(b) Matching results of Section \ref{sec:realapp3}

\smallskip

\begin{tabular}{@{}llrrrrrrrrr@{}}\hline\smallskip
& & \multicolumn{5}{l}{Before matching} & \multicolumn{4}{l}{After matching} \\ 
Description & Covariates & Male & Female & SD1 & SD2 & SD3 & Female & SD1 & SD2 & SD3 \\ 
Age at visit & AGE & 74.750 & 73.500 & 0.162 & -0.350 & 0.258 & 74.763 & -0.002 & -0.165 & 0.384 \\ 
Body Mass Index & BMI & 26.429 & 26.894 & -0.095 & -0.698 & -0.816 & 26.101 & 0.074 & -0.407 & -0.775 \\ 
Years of smoking history & SMOKYRS & 9.882 & 10.308 & -0.031 & -0.054 & -0.039 & 10.921 &-0.074 & -0.116 & -0.199 \\ 
Smoking in the past 30 days & TOB30 & 0.026 & 0.025 & 0.008 & 0.049 & 0.278 & 0.026 & 0.000 & 0.000 & 0.000 \\ \hline
\end{tabular}
\end{sidewaystable}

To evaluate the balance of covariates between the female and male groups, we use the standardized mean difference (SD1), and define the standardized variance difference (SD2) and standardized third central moment difference (SD3) as follows:
\[
\text{SD2} = \frac{2(s_1^2-s_0^2)}{s_1^2+s_0^2}, \quad \text{SD3} = \frac{2^{\frac{3}{2}}(\nu_1-\nu_0)}{(s_1^2+s_0^2)^{\frac{3}{2}}},
\]
where $s_m^2$ and $\nu_m$ are the sample variance and the third sample central moment for male participants $(m = 1)$ and female participants $(m = 0)$. 
Table \ref{tab:realapp2}(a) lists the means, SD1, SD2, and SD3 values before and after matching. 
The results indicate that the covariates are not balanced before matching. For instance, the SD1, SD2, and SD3 values for HATT are 0.422, 1.534, and 4.741, respectively. 
After matching, the covariates still appear to be poorly balanced. Specifically, the SD3 values for the covariates HATT, ANG1, ANG2, and ALC1 remain larger than 4.

We apply pHT, CD, CM, CNN1, CNN5, CMST1, CMST5, and $D5$ to the paired data after matching. Among these, only our proposed test ($D5$) produces a small $p$-value of 0.020. By contrast, the $p$-values for pHT, CD, CM, CNN1, CNN5, CMST1, and CMST5 are all larger than 0.05 (Table \ref{tab:intro-eg}), indicating that these tests cannot reject the null hypothesis of balanced covariates at the 0.05 significance level.
To investigate which result is more reliable, we apply the paired $t$-test to each of the 14 covariates. 
The smallest $p$-value from the paired $t$-tests is 0.001 (Table \ref{tab:app-cov1}). 
Using the Bonferroni correction ($0.05/14=0.004>0.001$), the paired $t$-test rejects the null hypothesis at the 0.05 significance level, supporting the result of $D5$.

\begin{table}[!h]
\centering
\caption{The $p$-values of pHT, CD, CM, CNN1, CNN5, CMST1, CMST5, and $D5$}\label{tab:intro-eg}
\begin{tabular}{@{}cccccccc@{}}\hline
pHT & CD & CM & CNN1 & CNN5 & CMST1 & CMST5 & $D5$\\
0.062 & 0.076 & 0.888 & 0.461 & 0.247 & 0.337 & 0.239 & 0.020 \\ \hline
\end{tabular}
\end{table}

\begin{table}[!h]
\centering
\caption{The $p$-values from the paired $t$-test applied to each of the 14 covariates separately}\label{tab:app-cov1}
 \begin{tabular}{@{}ccccccc@{}}\hline
  AGE & BPSYS &  TOB30 & SMOKYRS & ANG1 & ALC1 & DEP2YRS \\
   0.791 & 0.740 & 1.000 & 0.718 & 0.158 & 0.566 & 1.000 \\
  BMI & BPDIAS & TOB100 & HATT & ANG2 & ALC2 & PACKSPER \\ 
  0.290 & 0.725 & 0.670  & \textbf{0.001}  &  \textbf{0.011} & 1.000 & 0.679 \\\hline
\end{tabular}
\end{table}

\subsection{Comparison of neuropsychologic performances for pair-matched data}\label{sec:realapp3}
Another question of interest is to compare the neuropsychological performances of well-matched female and male participants. To facilitate the matching process, we focus on data from the initial visit and restrict the analysis to white participants who meet the following criteria: no history of stroke, transient ischemic attack, atrial fibrillation, heart problems, angioplasty/endarterectomy/stent, active depression in the last two years, diabetes, brain trauma, Parkinson’s disease, seizures, alcohol abuse, or other neurological or psychiatric disorders. Additionally, participants must have no family history of frontotemporal lobar degeneration mutation, Alzheimer’s disease mutation, or cognitive impairment in first-degree relatives.
For this analysis, we consider only four covariates: AGE, BMI, SMOKYRS, and TOB30 (Table \ref{tab:realapp2}(b)). Using the same matching method as described in Section \ref{sec:realapp2}, we match 76 male participants with 76 female participants from a pool of 120 female participants.

The means, SD1, SD2, and SD3 before and after matching are presented in Table \ref{tab:realapp2}(b). The results show that the covariates after matching are much more balanced compared to those in Section \ref{sec:realapp2}. We apply the proposed test $D5$ alongside other existing tests to the matched covariates. Since the $p$-values for pHT, CD, CM, CNN1, CNN5, CMST1, CMST5, and $D5$ are all greater than 0.05 (Table \ref{tab:real-pairmatch2}), none of these tests reject the null hypothesis at the 0.05 significance level.
Additionally, we examine the $p$-values of the paired $t$-test for each variable. As shown in Table \ref{tab:real-pairmatch2-bct}, the null hypothesis cannot be rejected after applying the Bonferroni correction. Based on these results, we conclude that the covariates are jointly reasonably balanced.
\begin{table}[!h]
\centering
\caption{The $p$-values of pHT, CD, CM, CNN1, CNN5, CMST1, CMST5, and $D5$}\label{tab:real-pairmatch2}
\begin{tabular}{@{}cccccccc@{}}\hline
pHT & CD & CM & CNN1 & CNN5 & CMST1 & CMST5 & $D5$ \\
0.510 & 0.500 & 0.906 & 0.281 & 0.818 & 0.082 & 0.225 & 0.116 \\ \hline
\end{tabular}
\end{table}
\begin{table}[!h]
\centering
\caption{The $p$-values from the paired $t$-test for variables AGE, BMI, SMOKYRS and TOB30}\label{tab:real-pairmatch2-bct}
\begin{tabular}{@{}cccc@{}}\hline
AGE & BMI & SMOKYRS & TOB30 \\
0.982 & 0.604 & 0.638 & 1.000  \\ \hline
\end{tabular}
\end{table}

After obtaining well-matched participants, we apply the paired Hotelling's $T^2$ test (pHT) and the proposed test ($D5$) to assess whether neuropsychological performances differ between the female and male groups. 
We focus on the Neuropsychological Battery Summary Scores from Form C1 of the UDS, which includes 22 neuropsychological measurement variables. A description of these 22 variables is provided in \ref{app:variable}.
The $p$-values for pHT and $D5$ are 0.053 and 0.009, respectively. While pHT does not reject the null hypothesis of equal neuropsychological performances between the female and male groups at the 0.05 significance level, the proposed test $D5$ does. 
To further investigate, we apply the paired $t$-test to each of the 22 covariates. The results in \ref{app3:t-test} show that the smallest $p$-value from the paired $t$-test is 0.001. 
Using the Bonferroni correction ($0.05/22=0.002>0.001$), the paired $t$-test also rejects the null hypothesis at the 0.05 significance level.
These findings suggest that the proposed test $D5$ is more reliable { and may reflect an underlying association between sex and Alzheimer's disease.}

\subsection{Comparison of neuropsychologic performances between two visits}\label{sec:realapp1}

We study the participants' neuropsychological performance over time, considering 22 neuropsychology measurement variables as described in Section \ref{sec:realapp3} (\ref{app:variable}). Participants are grouped into three categories based on the CDR\textcircled{R} Dementia Staging Instrument at their first visits: no dementia (Group I), very mild dementia (Group II), and mild dementia (Group III).
We test whether their neuropsychological performances over five years differ from those at their initial visits. After removing missing data, the sample sizes ($n$) for the three groups are 1746, 543, and 41, respectively.

\begin{table}[!h]
\centering
\caption{Test results of pHT and $D5$ (\( p \)-values less than 0.05 are highlighted in bold)}\label{tab:realsample}
\begin{tabular}{crcrcrc}\hline
& \multicolumn{2}{c}{Group I} & \multicolumn{2}{c}{Group II} & \multicolumn{2}{c}{Group III} \\
& Statistic & $p$-value  & Statistic & $p$-value  & Statistic & $p$-value  \\ 
 pHT & 26.14 & \textbf{$<$1e-4} & 11.92 & \textbf{$<$1e-4}  & 1.43 & 0.2194 \\
$D5$ & 38.90 & \textbf{$<$1e-4} & 34.70 & \textbf{$<$1e-4} & 12.30 & \textbf{0.0021} \\ \hline
\end{tabular}
\end{table}

Table \ref{tab:realsample} presents the results of the paired Hotelling's $T^2$  test (pHT) and the proposed test $D5$.
We first examine the results for Group I.  
Both pHT and $D5$ strongly reject the null hypothesis, with extremely small $p$-values. 
These results indicate a significant difference in neuropsychological measures between the two visits. Similarly, for Group II, both tests also reject the null hypothesis, suggesting a significant change in neuropsychological performance over time.

For Group III, the relatively small sample size reduces the power of the tests. The paired Hotelling's $T^2$ test does not reject the null hypothesis, whereas the proposed test $D5$ rejects the null hypothesis with a small $p$-value. To determine which result is more reliable, we examine the data in greater detail. Specifically, we perform paired $t$-tests on each of the 22 neuropsychological measurement variables. The results show that 13 out of the 22 variables have $p$-values less than 0.05, with the smallest $p$-value being 0.001 (Table \ref{tab:t-test} in \ref{app1:t-test}). Using the Bonferroni correction ($0.05/22=0.002>0.001$), the paired $t$-test rejects the null hypothesis at the 0.05 significance level, supporting the result from $D5$.

These results reveal that, for participants across all three groups, neuropsychological measures after several years become significantly different from those at their initial visits. This finding suggests that researchers should closely monitor changes in cognitive performance, even for participants with no dementia or (very) mild dementia.

\section{Discussion}\label{sec:discuss}
\subsection{Paired mean test and paired variance test}
When a mean difference exists, within-sample observations tend to cluster more closely, causing the numbers of within-sample edges, \(R_1\) and \(R_2\), to exceed their expected values. To detect mean differences, we use the standardized value of \(R_1 + R_2\).
In contrast, for variance differences, observations from the sample with smaller variance cluster closely, resulting in a higher-than-expected number of within-sample edges. Meanwhile, observations from the sample with larger variance tend to connect across samples due to the curse of dimensionality,  the difference \(R_1 - R_2\) becomes more pronounced. We use the absolute value of the standardized \(R_1 - R_2\) to detect variance differences.
These phenomena also align with those observed for unpaired data (see \cite{chen2017new} for details).

{ 
Based on these insights, we can construct two test statistics tailored to detect mean differences and variance differences, respectively. To test for a difference in means, we use the statistic
\begin{equation}\label{eq:Zm}
D_m = \frac{R_1 + R_2 - \bE(R_1 + R_2)}{\sqrt{\var(R_1 + R_2)}},
\end{equation}
and reject the null hypothesis at significance level \(\alpha\) if \(D_m > C_m(\alpha)\). To test for a difference in variances, we use the statistic
\begin{equation}\label{eq:Zs}
D_s = \frac{R_1 - R_2 - \bE(R_1 - R_2)}{\sqrt{\var(R_1 - R_2)}},
\end{equation}
and reject the null hypothesis if \(|D_s| > C_s(\alpha)\). 
As established in Theorem~\ref{th:asym} and stated formally in Corollary~\ref{corollary2}, both \(D_m\) and \(D_s\) converge in distribution to the standard normal under the null hypothesis. For ease of implementation, we set the critical values as \(C_m(\alpha) = \Phi^{-1}(1 - \alpha)\) and \(C_s(\alpha) = \Phi^{-1}(1 - \alpha/2)\), where \(\Phi^{-1}(b)\) denotes the $b$-quantile of the standard normal distribution.}

\begin{corollary}\label{corollary2}
Under Conditions \ref{cond1}, \ref{cond2}, and \ref{cond3}, as $N\rightarrow\infty$, 
\[
\begin{psmallmatrix}
D_m\\D_s
\end{psmallmatrix}\longrightarrow\cN\left(\begin{psmallmatrix}
0 \\ 0
\end{psmallmatrix}, \begin{psmallmatrix}
1 & 0 \\
0 & 1
\end{psmallmatrix}\right)
\]
in distribution under the paired-comparison permutation null distribution.
\end{corollary}

With the analytic expressions for expectations and variances in Theorem \ref{th:expression0}, we can further investigate the relationship among the statistics $D$, $D_m$, and $D_s$. 
\begin{proposition}\label{th:Zg}
The following relationships hold:
\[
D = D_m^2 + D_s^2 \text{ and } \cov(D_m,D_s)=0.
\]
\end{proposition}
\begin{proof}
Let
  \begin{equation*}
    \mathbf{R}=\begin{pmatrix}
      R_{1}-\bE(R_{1}) \\
      R_{2}-\bE(R_{2})
    \end{pmatrix}, \quad
    \check{\mathbf D}=\begin{pmatrix}
                   D_{m} \\
                   D_{s}
                 \end{pmatrix}=\begin{pmatrix}
                                 \frac{1}{\sqrt{\var(R_1+R_2)}} & \frac{1}{\sqrt{\var(R_1+R_2)}} \\
                                 \frac{1}{\sqrt{\var(R_1-R_2)}} & -\frac{1}{\sqrt{\var(R_1-R_2)}}
                               \end{pmatrix}\mathbf{R}\triangleq \mathbf B\mathbf{R}.
  \end{equation*}
  It is straightforward to verify that $\mathbf B$ is invertible. From the definition of $D$ (Equation (\ref{eq:Zg})), $D$ can be expressed as
  \[
 D=\mathbf{R}^\top\mathbf{\Sigma}^{-1}_R\mathbf{R}=(\mathbf B^{-1}\check{\mathbf D})^\top\mathbf\Sigma^{-1}_R(\mathbf B^{-1}\check{\mathbf D})=\check{\mathbf D}^\top(\mathbf B\mathbf\Sigma_R \mathbf B^\top)^{-1}\check{\mathbf D}.
  \]
By substituting $\mathbf B$ and $\mathbf{\Sigma}_R$, we have $\mathbf B\mathbf\Sigma_R \mathbf B^\top=\begin{psmallmatrix}
1 & 0 \\
0 & 1
\end{psmallmatrix}$.  
Thus, $D=\check{\mathbf D}^\top\check{\mathbf D}=D_m^2+D_s^2$, and since the covariance term is zero, $\cov(D_m,D_s)=0$.
\end{proof}

\subsection{{Sensitivity analysis for unobserved confounding in matched studies}}\label{sec:discuss-sensitivity}

{In practice, unobserved covariates may exist when matching subjects. To assess the influence of potential unobserved confounding, we adopt the sensitivity analysis framework in \cite{rosenbaum2010design}.
Let \(\Gamma\) denote the sensitivity parameter, representing the ratio of treatment-assignment odds between two matched subjects.  Let \(\pi_i\) be the probability of assigning treatment to one subject in pair \(i\). 
When no unobserved confounding is present ($\Gamma=1$), the treatment assignment probability for each subject equals ($\pi_i=0.5$).
When unobserved confounding exists (\(\Gamma > 1\)), the treatment assignment probability satisfies
\[
\frac{1}{\Gamma + 1} \;\;\leq\;\; \pi_i \;\;\leq\;\; \frac{\Gamma}{\Gamma + 1}.
\]

To conduct the sensitivity analysis, we introduce bias into the treatment assignment probabilities when generating permutations in the paired-comparison permutation framework. Specifically, for each pair \(i\), the observed treatment is retained with probability $\pi_i$, and the treatment assignment is reversed with probability $1 - \pi_i$.
In this scenario, the analytic expressions for the expectations and variances presented in Theorem~\ref{th:expression0} no longer hold. { While an analytic characterization would be valuable for sensitivity analysis, the complex form of the test statistic poses significant challenges, which we leave for future research.} Nonetheless, we can still estimate an adjusted $p$-value using weight-adjusted permutations.
To achieve this, we repeat the weight-adjusted permutation process for a large number of iterations ($B$) to compute the adjusted permutation expectations ($\bE_{\text{adj}}(R_1)$ and $\bE_{\text{adj}}(R_2)$) and variance $\mathbf{\Sigma}_{\text{adj}}$. Using these adjustments, we calculate the adjusted test statistic:  
\begin{equation*}
D_{\text{adj}} = \begin{pmatrix}
R_1 - \bE_{\text{adj}}(R_1) \\
R_2 - \bE_{\text{adj}}(R_2)
\end{pmatrix}^\top
\mathbf{\Sigma}_{\text{adj}}^{-1}
\begin{pmatrix}
R_1 - \bE_{\text{adj}}(R_1) \\
R_2 - \bE_{\text{adj}}(R_2)
\end{pmatrix}.
\end{equation*}
Similarly, the permuted test statistics are calculated as:
\begin{equation*}
D_{\text{perm}, b} = \begin{pmatrix}
R_{1,b} - \bE_{\text{adj}}(R_1) \\
R_{2,b} - \bE_{\text{adj}}(R_2)
\end{pmatrix}^\top
\mathbf{\Sigma}_{\text{adj}}^{-1}
\begin{pmatrix}
R_{1,b} - \bE_{\text{adj}}(R_1) \\
R_{2,b} - \bE_{\text{adj}}(R_2)
\end{pmatrix},
\end{equation*}
for \(b = 1, \dots, B\).
The adjusted permutation \(p\)-value for a given sensitivity parameter \(\Gamma\) is computed as:
\[
p_{\text{perm}, \Gamma} 
= \frac{1}{B} \sum_{b=1}^B 1\bigl(D_{\text{perm}, b} > D_{\text{adj}}\bigr).
\]
{ Specifically, to approximate the worst-case $p$-value, we randomly generate \(\pi_i \in \{1/(\Gamma+1), \Gamma / (\Gamma + 1)\}\) for each \(\Gamma\) and repeat the procedure over \(M\) runs. The maximum adjusted permutation \(p\)-value across these runs is then taken as the worst-case estimate.}

In Section \ref{sec:realapp3}, unobserved confounding may be present when matching male and female participants. To evaluate the impact, we perform a sensitivity analysis using various sensitivity parameters \(\Gamma = 1, 1.5, \dots, 3.5\). For each scenario, we conduct \(M = 500\) runs and \(B = 1000\) weight-adjusted permutations.
Table~\ref{tab:real-sensitivity} presents the maximum permutation \(p\)-values. The results indicate that the proposed test \(D5\) remains robust, consistently achieving statistically significant results (\(p < 0.05\)) for \(\Gamma \leq 2\). This demonstrates the resilience of $D5$ to potential unobserved confounding.

\begin{table}[!h]
\centering
\caption{ {Maximum adjusted permutation \(p\)-values for pHT and \(D5\) across 500 runs for varying sensitivity parameters \(\Gamma\)}}\label{tab:real-sensitivity}
\begin{tabular}{@{}ccccccc@{}}\hline
\(\Gamma\) & 1 & 1.5 & 2 & 2.5 & 3 & 3.5 \\ 
pHT & 0.053 & 0.075 & 0.097 & 0.123 & 0.165 & 0.200 \\
\(D5\) & 0.009 & 0.029 & 0.048 & 0.053 & 0.084 & 0.099 \\ \hline
\end{tabular}
\end{table}
}

\subsection{{ Fixed-ratio and variable-ratio matching examination}}
The proposed test can be extended to fixed-ratio matching. Suppose each subject in the treatment group is matched with a fixed number, \(k\), of subjects in the control group. For each matching \(i \in \{1, \dots, n\}\), let \(X_i\) denote the covariates for subject \(i\) in the treatment group, and let \(Y_{i1}, \dots, Y_{ik}\) denote the covariates for the corresponding \(k\) subjects in the control group.
Similar to the null hypothesis~\eqref{eq:prob} considered in this paper, we assume that any permutation of \(\bigl(X_i, Y_{i1}, \dots, Y_{ik}\bigr)\) has the same distribution under the null hypothesis. In this context, the permutation null distribution is defined by assigning a probability of \(\bigl(k+1\bigr)^{-n}\) to each of the \(\bigl(k+1\bigr)^{n}\) possible ways of “assigning one observation from \(\{X_i, Y_{i1}, \dots, Y_{ik}\}\) to sample~1 and the remaining observations to sample~2.”

{ 
Extending the proposed framework to variable-ratio matching introduces additional challenges. In this setting, each matched set \(i \in \{1, \ldots, n\}\) may contain a different number of observations, denoted by \(\{X_i, Y_{i1}, \ldots, Y_{ik_i}\}\). As a result, it is generally inappropriate to assume that the joint distribution of \((X_i, Y_{i1}, \ldots, Y_{ik_i})\) is identical across all matched sets.
One approach to address this complexity is to adopt an exchangeability-based perspective. Under this view, the null hypothesis can be formulated such that, within each matched set, the joint distribution remains invariant under permutations of the observations. Under this formulation, the permutation null distribution assigns a uniform probability of \(\prod_{i=1}^n (k_i + 1)^{-1}\) to each of the \(\prod_{i=1}^n (k_i + 1)\) possible ways of assigning one observation from \(\{X_i, Y_{i1}, \ldots, Y_{ik_i}\}\) to sample~1 and the remaining \(k_i\) observations to sample~2.

Within the two extended permutation frameworks discussed above, it is then possible to derive analytic expressions for expectations and variances analogous to those in Theorem~\ref{th:expression0}, and to conduct asymptotic analysis using arguments similar to those in Section~\ref{sec:asym}. A detailed investigation of this extension is left for future work.
}

\subsection{{Degeneracy issue}}

{
In practice, to determine whether \((R_1, R_2)\) is degenerate, the condition number of \(\Sigma_R\) can be computed. If its value exceeds a specified threshold (e.g., \(10^6\)), we recommend using only the standardized \(R_1\) as the test statistic. The asymptotic distribution of this statistic is provided in the following corollary.

\begin{corollary}
Under Condition~\ref{cond1}, and assuming either Condition~\ref{cond2} or Condition~\ref{cond3} holds, as \(N \to \infty\),
\[
D_R \triangleq \frac{\bigl(R_1 - \mathbb{E}(R_1)\bigr)^2}{\mathrm{Var}(R_1)} 
\;\longrightarrow\; \chi^2_1
\]
in distribution under the paired-comparison permutation null distribution, where \(\chi^2_1\) denotes the chi-squared distribution with one degree of freedom.
\end{corollary}

To test the null hypothesis at the significance level $\alpha$, we  reject it if \(D_R > \chi^2_1(1-\alpha)\), where \(\chi^2_1(1-\alpha)\) denotes the \((1-\alpha)\)-quantile of the \(\chi^2_1\) distribution.}

\section{Conclusion}\label{sec:conclude}

Paired data commonly arise in numerous scenarios, including independent paired data from pair matching and non-independent paired data from paired designs. In many modern datasets, the number of measurements is often comparable to, or even larger than, the number of pairs. 
We propose a new non-parametric test for paired data using a graph-based two-sample testing framework.
Because existing graph-based tests do not fully consider the paired structure and may underperform in some settings, we introduce the paired-comparison permutation null distribution and develop the statistic $D$. 
Our numerical experiments—both in simulations and in real applications—demonstrate that $D$ performs effectively for assessing covariate balance in pair matching and for testing non-independent paired data. It exhibits high power across a wide variety of settings, including shape, location, and/or scale alternatives.
Under the paired-comparison permutation null distribution, we also derive the asymptotic distribution of our new statistic $D$. 

As an advantage inherited from graph-based tests, the proposed method can also be applied to multi-type complex data, such as non-Euclidean data, provided a suitable distance metric can be defined on the observations.

\section*{Acknowledgement}
{Jingru Zhang was supported in part by National Natural Science Foundation of China (NSFC 12401388) and Shanghai Pujiang Program (Grant No. 23PJ1401100).}
Hao Chen was supported in part by NSF awards DMS-1513653, DMS-
1848579, and DMS-2311399.
Xiao-Hua Zhou was supported in part by the National Science Foundation of China (NSFC 81773546 and NSFC 12026606). 
{The NACC database is funded by NIA/NIH Grant U24 AG072122. NACC data are contributed by the NIA-funded ADRCs: P30 AG062429 (PI James Brewer, MD, PhD), P30 AG066468 (PI Oscar Lopez, MD), P30 AG062421 (PI Teresa Gomez-Isla, MD), P30 AG066509 (PI Thomas Grabowski, MD), P30 AG066514 (PI Mary Sano, PhD), P30 AG066530 (PI Helena Chui, MD, Arthur Toga, PhD), P30 AG066507 (PI Marilyn Albert, PhD), P30 AG066444 (PI David Holtzman, MD), P30 AG066518 (PIs Lisa Silbert, MD, Kevin Duff, PhD), P30 AG066512 (PI Thomas Wisniewski, MD), P30 AG066462 (PI Scott Small, MD), P30 AG072979 (PI David Wolk, MD), P30 AG072972 (PIs Charles DeCarli, MD, Rachel Whitmer, PhD), P30 AG072976 (PI Andrew Saykin, PsyD), P30 AG072975 (PI Julie Schneider, MD, MS), P30 AG072978 (PI Ann McKee, MD), P30 AG072977 (PI Robert Vassar, PhD), P30 AG066519 (PI Joshua Grill, PhD), P30 AG062677 (PIs Brad Boeve, MD, Ronald Petersen, MD, PhD), P30 AG079280 (PI Jessica Langbaum, PhD), P30 AG062422 (PI Gil Rabinovici, MD), P30 AG066511 (PI Allan Levey, MD, PhD), P30 AG072946 (PI Linda Van Eldik, PhD), P30 AG062715 (PI Sanjay Asthana, MD, FRCP), P30 AG072973 (PI Russell Swerdlow, MD), P30 AG066506 (PIs Glenn Smith, PhD, ABPP, David Lowenstein, PhD, Ranjan Duara, MD), P30 AG066508 (PIs Stephen Strittmatter, MD, PhD, Christopher Van Dyck, MD), P30 AG066515 (PI Victor Henderson, MD, MS), P30 AG072947 (PI Suzanne Craft, PhD), P30 AG072931 (PI Henry Paulson, MD, PhD), P30 AG066546 (PIs Sudha Seshadri, MD, Gladys Maestre, MD, PhD), P30 AG086401 (PI Erik Roberson, MD, PhD), P30 AG086404 (PI Gary Rosenberg, MD), P30 AG086403 (PI Angela Jefferson, PhD), P30 AG072958 (PIs Heather Whitson, MD, Gwenn Garden, MD, PhD), P30 AG072959 (PI Jagan Pillai, MD, PhD), P30 AG092752 (Ihab Hajjar, MD, MS).}

\appendix

\gdef\thesection{Appendix \Alph{section}}

\section{{ An example illustrating the distinction between the null {hypotheses} \eqref{eq:prob} and \eqref{eq:testold}}}\label{toyexample}
Let \((X_i, Y_i),~i=1, \dots, n,\) be paired data, where \(X_i\) takes values in a discrete integer state space:
\[
\mathcal{S} = \{A, A+1, \dots, B\}, \qquad L := B - A + 1.
\]
Define
\[
Y_i = f(X_i) :=
\begin{cases}
X_i + \alpha(X_i), & X_i \le B - \alpha(X_i), \\
X_i - (L - \alpha(X_i)), & X_i > B - \alpha(X_i),
\end{cases}
\]
where \(\alpha: \mathcal{S} \to \{0,1, 2, \dots, L-1\}\) is a shift function chosen so that \(f\) is a bijection (i.e., one-to-one and onto), but not equal to its own inverse.

Assume each \(X_i\) is drawn independently and uniformly from \(\mathcal{S}\), i.e., \(P(X_i = k) = 1/L\) for all \(k \in \mathcal{S}\). Since \(f\) is a bijection, the distribution of \(Y_i = f(X_i)\) is also uniform: \(P(Y_i = k) = P(X_i = f^{-1}(k)) = 1/L\). Thus, the marginal distributions are equal, i.e., \(F_X = F_Y\). However, because \(f \ne f^{-1}\), the joint distribution of \((X_i, Y_i)\) is not symmetric under coordinate swapping, and hence \(F_1 \ne F_2\).

As a concrete example,
suppose the joint distribution of \((X_i, Y_i)\) is:
\[
\begin{aligned}
& P(X_i = 1, Y_i = 3) = P(X_i = 2, Y_i = 4) = P(X_i = 3, Y_i = 5) \\
& = P(X_i = 4, Y_i = 1) = P(X_i = 5, Y_i = 2) = 1/5.
\end{aligned}
\]
Here, \(P(X_i = k) = P(Y_i = k) = 1/5\) for \(k = 1, 2, \dots, 5\), so \(F_X = F_Y\) holds. However, for instance, \(P(X_i = 1, Y_i = 3) = 1/5 \ne P(X_i = 3, Y_i = 1) = 0\), which shows that the joint distribution is not symmetric. Therefore, \(F_1 \ne F_2\).

\section{Proof of Theorem \ref{th:asym}}\label{pf:asym}
\begin{proof}
The proof of Theorem \ref{th:asym} relies on Stein's method. Consider the sum of the form $W=\sum_{i\in\cJ}\xi_i,$ where $\cJ$ is an index set and $\xi_i$ are random variables with $\bE(\xi_i)=0,$ and $\bE(W^2)=1.$ The following assumption restricts the dependence between $\{\xi_i:i\in\cJ\}$.

  \begin{assumption}\label{a}
    [\citet{chen2005stein}, p. 17] For each $i\in\cJ$ there exists $S_i\subset T_i\subset \cJ$ such that $\xi_i$ is independent of $\xi_{S_i^c}$ and $\xi_{S_i}$ is independent of $\xi_{T_i^c}$.
  \end{assumption}

  We will use the following theorem.

  \begin{theorem}\label{th:app}
    [\citet{chen2005stein}, Theorem 3.4] Under Assumption \ref{a}, we have
    \[
    \sup_{h\in Lip(1)}|\bE h(W)-\bE h(Z_0)|\leq\delta
    \]
    where $Lip(1)=\{h:\mathbb{R}\rightarrow\mathbb{R},~\parallel h'\parallel\leq 1\}$, $Z_0$ has $\cN(0,1)$ distribution and
    \[
    \delta=2\sum_{i\in\cJ}(\bE|\xi_i\eta_i\theta_i|+|\bE(\xi_i\eta_i)|\bE|\theta_i|)+\sum_{i\in\cJ}\bE|\xi_i\eta_i^2|
    \]
    with $\eta_i=\sum_{j\in S_i}\xi_j$ and $\theta_i=\sum_{j\in T_i}\xi_j,$ where $S_i$ and $T_i$ are defined in Assumption \ref{a}.
  \end{theorem}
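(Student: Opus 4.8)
The plan is to prove this by the standard Stein's-method route for normal approximation, exploiting the two-layer local independence encoded in Assumption \ref{a}. I would begin from the Stein equation $f'(w) - wf(w) = h(w) - \bE h(Z_0)$; for each $h\in Lip(1)$ its bounded solution $f=f_h$ satisfies the classical derivative bound $\|f_h''\|\le 2\|h'\|\le 2$, which is the only analytic input I need (I would cite it rather than rederive it). Since $\bE h(W)-\bE h(Z_0)=\bE[f'(W)-Wf(W)]$, it suffices to bound $|\bE[f'(W)]-\bE[Wf(W)]|$ uniformly over $Lip(1)$, and the factor $2$ and the three summands of $\delta$ will all emerge from $\|f''\|\le 2$.

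Next I would expand $\bE[Wf(W)]=\sum_{i\in\cJ}\bE[\xi_i f(W)]$ and peel off the first neighborhood. Writing $W-\eta_i=\sum_{j\notin S_i}\xi_j$, the first independence ($\xi_i\perp\xi_{S_i^c}$) together with $\bE\xi_i=0$ gives $\bE[\xi_i f(W-\eta_i)]=0$, so $\bE[\xi_i f(W)]=\bE[\xi_i(f(W)-f(W-\eta_i))]$. A Taylor expansion around $W$, namely $f(W)-f(W-\eta_i)=\eta_i f'(W)+R_i$ with $|R_i|\le\tfrac12\|f''\|\eta_i^2\le\eta_i^2$, then yields $\bE[Wf(W)]=\sum_i\bE[\xi_i\eta_i f'(W)]+\mathrm{err}_1$ where $|\mathrm{err}_1|\le\sum_i\bE|\xi_i\eta_i^2|$; this accounts for the third term of $\delta$.

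The crux is matching $\sum_i\bE[\xi_i\eta_i f'(W)]$ against $\bE[f'(W)]$. Here I would use the normalization identity $\sum_i\bE[\xi_i\eta_i]=\bE[W^2]=1$, which holds because the first independence forces $\bE[\xi_i\xi_j]=0$ whenever $j\notin S_i$, so only the $S_i$-terms survive in $\bE[W^2]$. Setting $V=\sum_i\xi_i\eta_i$ (so $\bE V=1$), this gives $\bE[f'(W)]-\sum_i\bE[\xi_i\eta_i f'(W)]=\bE[f'(W)(1-V)]=-\cov(f'(W),V)=-\sum_i\cov(f'(W),\xi_i\eta_i)$. For each term I would invoke the second independence: since $\xi_i\eta_i$ is a function of $\xi_{S_i}$ and $W-\theta_i=\sum_{j\notin T_i}\xi_j=\xi_{T_i^c}$, Assumption \ref{a} gives $\cov(f'(W-\theta_i),\xi_i\eta_i)=0$. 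Subtracting this vanishing covariance centers $f'$ at $W-\theta_i$, leaving $\cov(f'(W),\xi_i\eta_i)=\bE[(f'(W)-f'(W-\theta_i))\xi_i\eta_i]-\bE[f'(W)-f'(W-\theta_i)]\,\bE[\xi_i\eta_i]$, and because $W-(W-\theta_i)=\theta_i$ the bound $|f'(W)-f'(W-\theta_i)|\le\|f''\|\,|\theta_i|\le 2|\theta_i|$ gives exactly $|\cov(f'(W),\xi_i\eta_i)|\le 2\bE|\xi_i\eta_i\theta_i|+2|\bE(\xi_i\eta_i)|\,\bE|\theta_i|$. Summing over $i$ and adding $\mathrm{err}_1$ recovers $\delta$.

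I expect the main obstacle to be orchestrating the two independence layers cleanly rather than any single estimate. The first layer is used twice — to annihilate the linear term $\bE[\xi_i f(W-\eta_i)]$ and to establish $\sum_i\bE[\xi_i\eta_i]=1$ — while the second layer is precisely what makes the covariance centering at $W-\theta_i$ exact. The nesting $S_i\subset T_i$ is essential: it dictates that the Taylor step must be centered using $\eta_i$ (where $\xi_i\perp\xi_{S_i^c}$ is available) and the covariance step using $\theta_i$ (where $\xi_{S_i}\perp\xi_{T_i^c}$ is available), so each centering is forced by exactly which independence holds, and it is this alignment that makes the error terms collapse to the stated $\delta$.
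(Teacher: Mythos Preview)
The paper does not prove this statement at all: Theorem \ref{th:app} is quoted verbatim from \cite{chen2005stein} and used as a black box in the proof of Theorem \ref{asym}, so there is no ``paper's own proof'' to compare against. Your sketch is the standard Stein's-method derivation of this bound --- Stein equation, the $\|f''\|\le 2$ estimate for $h\in Lip(1)$, a first Taylor step centered at $W-\eta_i$ using $\xi_i\perp\xi_{S_i^c}$ to kill the linear term and to get $\sum_i\bE[\xi_i\eta_i]=1$, then a covariance centering at $W-\theta_i$ using $\xi_{S_i}\perp\xi_{T_i^c}$ --- and it is correct as written. In the context of this paper no proof is expected; citing the result is all that is done.
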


Let
$
\bE(R_1) = \bE(R_2) \triangleq \mu$, $\var(R_1) = \var(R_2) \triangleq \sigma^2$,  $\cov(R_1,R_2) \triangleq \sigma_{12}$, 
\[
W_1 = \frac{R_1-\mu}{\sigma} \quad\text{ and }\quad W_2 = \frac{R_2-\mu}{\sigma}.
\]
Under the conditions of Theorem \ref{th:asym}, as $N\rightarrow\infty$, we can prove the following results:
  
(1) $(W_1,W_2)$ follows a bivariate Gaussian distribution,

(2) $|\lim_{N\rightarrow\infty}\corr(W_1,W_2)|<1.$

 To prove (1), by Cram$\acute{\text{e}}$r-Wold device, we only need to show that $W=a_1W_1+a_2W_2$ is asymptotically Gaussian distributed for any combination of $a_1,a_2$ such that $\var(W)>0$.

  We first define more notations. For any edge $e=(u,v)$ of $G_1$, $i.e.~uv\in\cJ=\{uv:u<v,(u,v)\in G_1\},$ let
  \[
  R_{uv}^{(1)}=I(g_u=g_v=1),\quad d_{uv}^{(1)}=\bE(R_{uv}^{(1)})=\frac{1}{4},\quad\xi_{uv}^{(1)}=\frac{R_{uv}^{(1)}-d_{uv}^{(1)}}{\sigma},
  \]
  \[
  R_{uv}^{(2)}=I(g_u=g_v=0),\quad d_{uv}^{(2)}=\bE(R_{uv}^{(2)})=\frac{1}{4},\quad\xi_{uv}^{(2)}=\frac{R_{uv}^{(2)}-d_{uv}^{(2)}}{\sigma}.
  \]
  Thus, 
  \[
  W_1 = \sum_{uv\in\cJ}\xi_{uv}^{(1)},\quad W_2 = \sum_{uv\in\cJ}\xi_{uv}^{(2)},
  \]
  \[
  W = a_1W_1+a_2W_2 = \sum_{uv\in\cJ}(a_1\xi_{uv}^{(1)}+a_2\xi_{uv}^{(2)})\triangleq\sum_{uv\in\cJ}\xi_{uv},
  \]
  where $\xi_{uv} = a_1\xi_{uv}^{(1)}+a_2\xi_{uv}^{(2)}.$
  
  We introduce following index sets to satisfy Assumption \ref{a}. For $uv\in\cJ$, let
\begin{align*}
& S_{uv} \doteq A_e = \{(i,j)\in G_1:i\in \{u,u^*,v,v^*\} \text{ or } j\in\{u,u^*,v,v^*\}\}, \\
& T_{uv} \doteq B_e = \cup_{\tilde e\in A_e}A_{\tilde e}.
\end{align*}
   Let $a=\max\{|a_1|,|a_2|\}$. Since $R_{uv}^{(1)}\in\{0,1\}$ and $d_{uv}^{(1)}=1/4$, we have $|\xi_{uv}^{(1)}|\leq 3/(4\sigma)$. Similarly, $|\xi_{uv}^{(2)}|\leq 3/(4\sigma)$ and we have  $|\xi_{uv}|\leq 3a/(2\sigma)$.
   
   Hence, 
   \[
   \sum_{j\in S_{uv}}|\xi_j|\leq \frac{3a}{2\sigma}|A_e| \quad \text{ and }\quad\sum_{j\in T_{uv}}|\xi_j|\leq \frac{3a}{2\sigma}|B_e|,
   \]
   where $e = (u,v)$.
   For $i=uv\in\cJ$, the terms $\bE|\xi_i\eta_i\theta_i|, |\bE(\xi_i\eta_i)|\bE|\theta_i|$ and $\bE|\xi_i\eta_i^2|$ are all bounded by
   \[
   \frac{27a^3}{8\sigma^3}|A_e||B_e|.
   \]
   
   So we have $ \sup_{h\in Lip(1)}|\bE h(\widetilde{W})-\bE h(Z_0)|\leq\delta$ with $\widetilde{W}=W/\sqrt{\var(W)}$, $Z_0\sim {\cal N}(0,1)$, and
  \begin{align*}
    \delta = & \frac{1}{\sqrt{\var^3(W)}}\Bigg\{2\sum_{i\in\cJ}(\bE|\xi_i\eta_i\theta_i|+|\bE(\xi_i\eta_i)|\bE|\theta_i|)+\sum_{i\in\cJ}\bE|\xi_i\eta_i^2|\Bigg\} \\
    \leq & \frac{27a^3}{\sigma^3\sqrt{\var^3(W)}}\sum_{e\in G_1}|A_e||B_e|.
    \end{align*}
    Since $27a^3/\sqrt{\var^3(W)}$ is of constant order and $\sigma = O(N^{0.5\gamma})$, when $\sum_{e\in G_1}|A_e||B_e|=o(N^{1.5\gamma})$, we have $\delta\rightarrow 0$ as $N\rightarrow \infty$.
    
    Then we prove (2). We have
\begin{align*}
& \lim_{N\rightarrow\infty}\corr(W_1,W_2) = \lim_{N\rightarrow\infty}\frac{\sigma_{12}}{\sigma^2} = \lim_{N\rightarrow\infty}\frac{\sigma_{12}}{\sqrt{\sigma_{12}^2+4b_1b_2}},
\end{align*}
where 
\[
b_1 = \frac{1}{16}\sum_{(i,j)\in G_1}\left\{1+I(i^*\in G_{1,j^*})-I( i \in G_{1,j^*} ) - I( j \in G_{1,i^*} )\right\}=\frac{1}{16}(|G_1|+2C_1-2C_2)
\]
and 
\[
b_2 = \frac{1}{16}\sum_{i=1}^n(|G_{1,i}|-|G_{1,i^*}|)^2.
\]
Since $b_1=O(N^\gamma)$ and $b_2=O(N^\gamma)$, we obtain $|\lim_{N\rightarrow\infty}\corr(W_1,W_2)|<1$.
\end{proof}

\section{Proof of Theorem \ref{th:consistency}}\label{pf:consistency}
\begin{proof}
Let the density functions of the two multivariate distributions be $f$ and $g$. Following the approach in \citet{henze1999multivariate}, we have
\[
\frac{R_1}{N}\rightarrow \frac{k}{2}\int \frac{f^2(x)}{f(x)+g(x)} dx, \quad\frac{R_2}{N}\rightarrow \frac{k}{2}\int \frac{g^2(x)}{f(x)+g(x)} dx, 
\]
almost surely.
Let 
\[
\delta_1 = \lim_{N\rightarrow\infty}\frac{R_1-\mu_1}{N},\quad \delta_2 = \lim_{N\rightarrow\infty}\frac{R_2-\mu_2}{N},
\]
\[
r_1 = \lim_{N\rightarrow\infty} \frac{|G_1|+2C_1-2C_2}{N},\quad r_2 = \lim_{N\rightarrow\infty} \frac{\sum_{i=1}^n(|G_{1,i}|-|G_{1,i^*}|)^2}{N}.
\]
Then
\begin{align*}
\lim_{N\rightarrow\infty}\frac{D}{N} = & \lim_{N\rightarrow\infty}\left(\frac{R_1-\mu_1}{N},\frac{R_2-\mu_2}{N}\right)\left(\frac{\Sigma_R}{N}\right)^{-1}\left(\frac{R_1-\mu_1}{N},\frac{R_2-\mu_2}{N}\right)^\top \\
= & \frac{4}{r_1r_2}(\delta_1,\delta_2)\begin{pmatrix}
r_1+r_2 & r_2-r_1 \\
r_2-r_1 & r_1+r_2
\end{pmatrix}
(\delta_1,\delta_2)^\top \\
= & \frac{4}{r_1r_2}\left[r_1(\delta_1-\delta_2)^2+r_2(\delta_1+\delta_2)^2\right].
\end{align*}
We next show that $\delta_1+\delta_2>0$ when $f$ and $g$ differ on a set of
positive measure.
Note that 
\begin{align*}
\frac{2(\delta_1+\delta_2)}{k} = & \lim_{N\rightarrow\infty}\int\frac{f^2(x)+g^2(x)}{f(x)+g(x)}dx-\frac{|G_1|}{Nk} \\
= & \int\frac{f^2(x)+g^2(x)}{f(x)+g(x)}dx-1+\lim_{N\rightarrow\infty}\frac{|G|-|G_1|}{Nk} \\
\geq & \int\frac{f^2(x)+g^2(x)}{f(x)+g(x)}dx-1.
\end{align*}
Since 
\begin{align*}
\int\frac{f^2(x)+g^2(x)}{f(x)+g(x)}dx-1 & = \int\frac{f^2(x)+g^2(x)}{f(x)+g(x)}-f(x)dx = \int\frac{g(x)(g(x)-f(x))}{f(x)+g(x)}dx \\
& = \int\frac{f^2(x)+g^2(x)}{f(x)+g(x)}-g(x)dx = \int\frac{f(x)(f(x)-g(x))}{f(x)+g(x)}dx, 
\end{align*}
we have 
\begin{align*}
\int\frac{f^2(x)+g^2(x)}{f(x)+g(x)}dx-1 = \frac{1}{2}\int\frac{(f(x)-g(x))^2}{f(x)+g(x)}dx.
\end{align*}
So $\delta_1+\delta_2$ is strictly positive when $f$ and $g$ differ on a set of positive
measure.
\end{proof}

\section{{Additional simulation results}}
\label{app:addsimu}
{As part of the sensitivity analysis, we also apply the proposed test using the 1-MST, 5-MST, and 10-MST constructed under the \(L_1\) distance, denoted by \(\widetilde{D1}\), \(\widetilde{D5}\), and \(\widetilde{D10}\), respectively.  
Table \ref{tab:addsimu1} presents the results for \(\widetilde{D1}\), \(\widetilde{D5}\), and \(\widetilde{D10}\) under the settings described in Section \ref{sec:simu-pairmatch}, and
 Table \ref{tab:addsimu2} presents the results for settings in Section \ref{sec:simu-correlated}.}

\begin{table}[!h]
\centering
\caption{{Proportion of trials (out of 1000) in which the test rejects covariate balance at the 0.05 significance level}}\label{tab:addsimu1}
\begin{tabular}{@{}lcccccccc@{}}\hline
& A1 & A2 & A3 & A4 & B1 & B2 & B3 & B4 \\
 $\widetilde {D1}$ & 0.049 & 0.046 & 0.621 & 0.634 & 0.054 & 0.045 & 0.461 & 0.457 \\
 $\widetilde {D5}$ & 0.042 & 0.055 & 0.921 & 0.924 & 0.050 & 0.047 & 0.813 & 0.820 \\
 $\widetilde {D10}$ & 0.039 & 0.047 & 0.945 & 0.945 & 0.057 & 0.054 & 0.873 & 0.877 \\ \hline
\end{tabular}
\end{table}

\begin{table}[!h]
\centering
\caption{{Estimated power at the 0.05 significance level based on 1000 runs.}}\label{tab:addsimu2}
(a) Data from the same family of distributions

\smallskip

\begin{tabular}{llcccccc}\hline\smallskip
& & \multicolumn{3}{c}{(i) $\nu_1\neq\nu_2$, $\Gamma_1=\Gamma_2$} & \multicolumn{3}{c}{(ii) $\nu_1\neq\nu_2$, $\Gamma_1\neq\Gamma_2$}\\
& $d$ & 50 & 100 & 1000 & 50 & 100 & 1000\\
multivariate normal & $\widetilde {D1}$ & 0.140 & 0.127 & 0.136 & 0.840 & 0.834 & 0.996 \\ 
& $\widetilde {D5}$ & 0.602 & 0.662 & 0.670 & 0.963 & 0.956 & 1.000 \\
& $\widetilde {D10}$ & 0.819 & 0.878 & 0.876 & 0.994 &0.994 & 1.000 \\
multivariate $t$ & $\widetilde {D1}$ & 0.204 & 0.225 & 0.238 & 0.779 & 0.798 & 0.940 \\ 
& $\widetilde {D5}$ & 0.703 & 0.707 & 0.430 & 0.931 &0.902 & 0.894 \\
& $\widetilde {D10}$ & 0.799 & 0.811 & 0.434 & 0.965 &0.935 & 0.888 \\
multivariate log-normal & $\widetilde {D1}$ & 0.924 &0.957 & 0.996 & 0.998 & 0.997 & 1.000 \\
& $\widetilde {D5}$ & 0.932 & 0.963 & 1.000 & 0.998 &0.997 & 1.000 \\
& $\widetilde {D10}$ & 0.974 & 0.989 & 1.000 & 1.000 &1.000 & 1.000 \\ \hline
\end{tabular}

\vspace{0.3cm}
(b) Data from different families of distributions

\smallskip

\begin{tabular}{llcccccc}\hline\smallskip
& & \multicolumn{3}{c}{$\Delta=\bzero_d$} & \multicolumn{3}{c}{$\Delta=0.2\bone_d$}\\
& $d$ & 50 & 100 & 1000 & 50 & 100 & 1000\\ 
$\tau_i\stackrel{iid}{\sim}\text{normal}$ & $\widetilde {D1}$ &  0.675 & 0.858 & 1.000& 0.712 & 0.894 & 1.000\\
& $\widetilde {D5}$ & 0.783 & 0.918 & 1.000& 0.869 & 0.976 & 1.000\\
& $\widetilde {D10}$ & 0.829 & 0.957 & 1.000&0.930 & 0.994 & 1.000\\
$\tau_i\stackrel{iid}{\sim}\text{skew normal}$ & $\widetilde {D1}$ & 0.580 & 0.784 & 1.000&0.566 & 0.805 & 1.000\\ 
& $\widetilde {D5}$ & 0.692 & 0.870 & 1.000& 0.733 & 0.932 & 1.000\\
& $\widetilde {D10}$ & 0.728 & 0.913 & 1.000&0.803 & 0.957 & 1.000\\
$\tau_i\stackrel{iid}{\sim}\text{Laplace}$ & $\widetilde {D1}$ & 0.354 & 0.530 & 1.000& 0.386 & 0.636 & 1.000\\ 
& $\widetilde {D5}$ & 0.426 & 0.646 & 1.000& 0.601 & 0.884 & 1.000\\
& $\widetilde {D10}$ & 0.466 & 0.698 & 1.000&0.727 & 0.964 & 1.000\\\hline
\end{tabular}

\end{table}

\section{More details of the real application} \label{app:application}
\subsection{Neuropsychology measure variables}\label{app:variable}
Table \ref{tab:variable} lists the neuropsychology measure variables in the real application of Alzheimer's disease.  

\begin{table}[!h]
\centering
\caption{A description of neuropsychology measure variables}\label{tab:variable}
\begin{tabular}{ll}\hline\smallskip
Variable & Description  \\
MMSEORDA & Orientation subscale score --- Time \\ 
MMSEORLO & Orientation subscale score --- Place \\ 
PENTAGON & Intersecting pentagon subscale score \\
NACCMMSE & Total MMSE score \\
LOGIMEM & Total number of story units recalled from this current test administration \\
MEMUNITS & Logical Memory IIA --- Delayed --- Total number of story units recalled \\
MEMTIME & Logical Memory IIA --- Delayed --- Time elapsed since Logical Memory \\
& IA --- Immediate \\
DIGIF & Digit span forward trials correct \\
DIGIFLEN & Digit span forward length \\
DIGIB & Digit span backward trials correct \\
DIGIBLEN & Digit span backward length \\
ANIMALS & Total number of animals named in 60 seconds \\
VEG & Total number of vegetables named in 60 seconds \\
TRAILA & Trail Making Test Part A --- Total number of seconds to complete \\
TRAILARR & Part A --- Number of commission errors \\
TRAILALI & Part A --- Number of correct lines \\
TRAILB & Trail Making Test Part B --- Total number of seconds to complete \\
TRAILBRR & Part B --- Number of commission errors \\
TRAILBLI & Part B --- Number of correct lines \\
WAIS & WAIS-R Digit Symbol \\
BOSTON & Boston Naming Test (30) --- Total score \\
COGSTAT & Per clinician, based on the neuropsychological examination, the subject's \\
&  cognitive status is deemed \\ \hline
\end{tabular}
\end{table}

\subsection{Additional results of Section \ref{sec:realapp3}}\label{app3:t-test}

\begin{table}[!h]
\centering
\caption{The $p$-values of applying the paired $t$-test to each of the 22 neuropsychology measurement variables separately for pair-matched male and female participants}\label{tab:app-cov2}
 \begin{tabular}{lc}\hline\smallskip
Variable & $p$-value \\
MMSEORDA & 0.184 \\
MMSEORLO & 0.849 \\
PENTAGON & 0.321 \\
NACCMMSE & 0.346 \\
LOGIMEM & \textbf{0.025} \\
MEMUNITS & \textbf{0.009} \\
MEMTIME & 0.893 \\
DIGIF & 0.904 \\
DIGIFLEN & 0.691 \\
DIGIB & 0.088 \\
DIGIBLEN & 0.358 \\
ANIMALS & 0.640 \\
VEG & \textbf{0.001} \\
TRAILA & \textbf{0.016} \\
TRAILARR & 0.810 \\
TRAILALI & 1.000 \\
TRAILB & 0.607 \\
TRAILBRR & 0.743 \\
TRAILBLI & 0.321 \\
WAIS & \textbf{0.040} \\
BOSTON & 0.257 \\
COGSTAT & 0.556 \\ \hline
\end{tabular}
\end{table}

Table \ref{tab:app-cov2} shows the $p$-values of the paired $t$-test when it is applied to each of the 22 neuropsychology measurement variables separately.
Since the smallest $p$-value of the paired t-test is 0.001, the paired $t$-test would reject the null hypothesis at 0.05 significance level $(0.05/22 = 0.002>0.001)$ by the Bonferroni correction.

\subsection{Additional results of Section \ref{sec:realapp1}}\label{app1:t-test}

\begin{table}[!h]
\centering
\caption{The $p$-values of applying the paired $t$-test to each of the variables separately for the mild dementia group, the initial visit versus the visit after five years}\label{tab:t-test}
 \begin{tabular}{lc}\hline\smallskip
Variable & $p$-value \\
MMSEORDA & 0.068 \\
MMSEORLO & 0.173 \\
PENTAGON & 0.534 \\
NACCMMSE & \textbf{0.017} \\
LOGIMEM & 0.171 \\
MEMUNITS & 0.399 \\
MEMTIME & 0.469 \\
DIGIF & \textbf{0.031} \\
DIGIFLEN & \textbf{0.014} \\
DIGIB & \textbf{0.027} \\
DIGIBLEN & \textbf{0.031} \\
ANIMALS & \textbf{0.035} \\
VEG & \textbf{0.010} \\
TRAILA & \textbf{0.036} \\
TRAILARR & \textbf{0.049} \\
TRAILALI & 0.418 \\
TRAILB & \textbf{0.004} \\
TRAILBRR & 0.227 \\
TRAILBLI & \textbf{0.013} \\
WAIS & \textbf{0.001} \\
BOSTON & \textbf{0.005} \\
COGSTAT & 0.643 \\ \hline
\end{tabular}
\end{table}
Table \ref{tab:t-test} shows the $p$-values of the paired $t$-test when it is applied to each of the 22 neuropsychology measurement variables separately.
Since the smallest $p$-value of the paired t-test is 0.001, the paired $t$-test would reject the null hypothesis at 0.05 significance level $(0.05/22 = 0.002>0.001)$ by the Bonferroni correction.

\clearpage
\bibliographystyle{elsarticle-harv} 
\bibliography{paired_test}


\end{document}

\endinput